%\documentclass[draft]{article}
%
%\usepackage{amsmath,amsfonts,amsthm,amssymb,amscd,color}
%\usepackage{slashbox, multirow}
%\usepackage[dvipdfmx]{graphicx}
%\usepackage[notref]{showkeys}                   %draft
%\usepackage[format=hang,font=footnotesize]{caption}

%&LaTeX
 \documentclass[9pt, pdfmode]{article}

\usepackage{amsmath,amsfonts,amsthm,amssymb,amscd,cancel,color}

\usepackage{verbatim}
\usepackage{graphicx}
\setlength{\textheight}{8in}
\setlength{\oddsidemargin}{-0.1in}
\setlength{\textwidth}{6in}
\setlength{\parindent}{0.75cm}

\binoppenalty=9999 \relpenalty=9999

\renewcommand{\Re}{\mathop{\rm Re}\nolimits}
\renewcommand{\Im}{\mathop{\rm Im}\nolimits}

\theoremstyle{plain}
\newtheorem{theorem}{Theorem}[section]
\newtheorem{lemma}[theorem]{Lemma}
\newtheorem{proposition}[theorem]{Proposition}
\newtheorem{corollary}[theorem]{Corollary} 
\theoremstyle{definition}
 
\theoremstyle{remark}
\newtheorem{remark}[theorem]{Remark}

\newcommand{\R}{{\mathbb R}}

\newcommand{\Z}{{\mathbb Z}}

\newcommand{\N}{{\mathbb N}}

\def\im{{\rm i}}

\newcommand{\C}{\mathbb{C}}
\newcommand{\T}{\mathbb{T}}

\def\({\left(}
\def\){\right)} 
\def\<{\left\langle}
\def\>{\right\rangle}

\numberwithin{equation}{section}

\setcounter{section}{0}

%%%%%%%%%%%%%% This note only
\newcommand{\ur}{u_{\uparrow}}
\newcommand{\ul}{u_{\downarrow}}

\newcommand{\ai}{\alpha_{\infty}}

\newcommand{\ri}{\rho_{\infty}}

\newcommand{\jve}{J_{\mathrm{VE}}}
\newcommand{\jev}{J_{\mathrm{EV}}}

\begin{document}

\title{Absence of singular continuous spectra and embedded eigenvalues for one dimensional quantum walks with general long-range coins}

\author {Masaya Maeda, Akito Suzuki, Kazuyuki Wada}

\maketitle

AMS Classification: 47A10, 47A40, 34L40, 81U30, 39A06

\begin{abstract}
This paper is a continuation of the paper \cite{W} by the third author, which studied quantum walks with special long-range perturbations of the coin operator.
In this paper, we consider general long-range perturbations of the coin operator and prove the non-existence of a  singular continuous spectrum and embedded eigenvalues.
The proof relies on the construction of generalized eigenfunctions (Jost solutions) which was studied in the short-range case in \cite{MSSSSdis}.
\end{abstract}

%\tableofcontents
%
%\newpage
\section{Introduction}
Quantum Walks (QWs), which are usually considered to be the quantum counterpart of classical random walks \cite{ADZ, ANVW, G, M} are now attracting diverse interest due to the connections with various fields in mathematics and physics. From a numerical analysis point of view, QWs are splitting method of Dirac equations, for splitting method, see e.g.\ \cite{HLW10GNI}. Indeed, the first model of QWs which appears in Feynman's textbook is obtained by discretization of the generator of Dirac equation in 1D \cite{FH}. It seems that the connection between Dirac equations and QWs is getting more and more important \cite{ANF, BES, LKN, MB, MBD, MDB, MaSu, Sh, St}. 
More directly, it has begun to be realized that many tools for the study of dispersive equations, such as the Dirac equations and the more well-studied Schr\"odinger equations, are transferable to the study of QWs.
%In recent years, it has been noticed that many tools for the study of dispersive equations such as Dirac equations and more well studied Schr\"{o}dinger equations can be transposed to the study of QWs. 
More specifically, the spectral and scattering theory for QWs were studied for short-range perturbations \cite{ABC, ABJ, CGMV, FFSI, FFSII, HM, MoSe, MSSSSdis, RSTI, RSTII, S, W}. For other directions, such as quasi-periodic perturbations and random perturbations, see \cite{FO17JFA, K, YKE}.

In this paper, we study the spectral 
%and scattering 
theory for long-range perturbations.
In particular, we show the absence of singular continuous spectrum and embedded eigenvalues, which are fundamental problems of spectral theory. %As already mentioned in \cite{RSTI}, in a context of quantum walks, the absence of singular continuous spectrum is a one of conditions to derive the weak limit theorem.
To establish such result, a standard strategy is to apply the commutator theory for unitary operators \cite{ABC, ABJ, RSTI, W}.
However, in this paper, we construct Jost solutions with modified phases, which also called modified plane waves.
This stratedy, which is only applicable for one dimensional problems, gives the strongest result because it is based on ordinary differential (actually difference) equations.
Our method also seems to be applicable to one dimensional Dirac equation with arbitrary long-range potential, which are no results up to the authors knowledge.
  
We note that the long-range perturbation was first considered by the third author for a special class of perturbations \cite{W}, which generates no long-range phase modification.

\subsection{Set up}
For a $\C^2$-valued map $u:\Z\to \C^2$, we write
\begin{align*}
u(x)=\begin{pmatrix} \ur(x) \\ \ul(x) \end{pmatrix}, \quad x\in\Z.
\end{align*}
For Banach spaces $X$ and $Y$, we denote the Banach space of all bounded linear operators from $X$ to $Y$ by $\mathcal L(X,Y)$.
We also set $\mathcal L(X):=\mathcal L(X,X)$.
For $u:\Z\to \C$, we define $\tau u (x):=u(x-1)$, $x\in\mathbb{Z}$.

Let $\mathcal H:=l^2(\Z,\C^2)$.
We define the shift operator $S\in \mathcal L(\mathcal H)$ by
\begin{align}\label{def:shift}
\(S \begin{pmatrix} \ur \\ \ul \end{pmatrix}\)(x):=\(\begin{pmatrix} \tau & 0 \\ 0 & \tau ^{-1} \end{pmatrix} \begin{pmatrix} \ur \\ \ul \end{pmatrix}\)(x)=\begin{pmatrix} \ur(x-1) \\ \ul(x+1) \end{pmatrix},\hspace{5mm}x\in\mathbb{Z}.
\end{align}
Let $\alpha$ and $\beta$ be $\C$-valued functions on $\Z$ and $\theta$ be $\R$-valued function on $\Z$. For $\alpha$ and $\beta$, we impose the following condition:
$$|\alpha(x)|^2+|\beta(x)|^2=1\quad x\in\Z.$$

We define the coin operator $C=C_{\alpha,\beta,\theta}\in \mathcal L(\mathcal H)$ by
\begin{align}\label{def:coin:pre}
Cu(x):=C_{\alpha(x),\beta(x),\theta(x)}u(x),\quad C_{\alpha(x),\beta(x),\theta(x)}:=e^{\im \theta(x)} \begin{pmatrix} \beta(x) & \overline{\alpha(x)} \\ -\alpha(x) & \overline{\beta(x)} \end{pmatrix} \in U(2),
\end{align}
where $U(2)$ is the set of all $2\times 2$ unitary matrices. We define the time evolution operator of a quantum walk (QW) as
\begin{align}\label{def:U}
Uu:=SCu.
\end{align}
We are interested in the situation that the coin operator converges to a fixed unitary matrix as $|x|\to \infty$. 
In particular, we consider the following ``long-range" situation:
\begin{align}\label{ass:long:pre}
\begin{cases}
\alpha(x)\in B_{\C}(0, 1):=\{z\in\C|\ |z|<1\},
\\
\sum_{x\in\Z}\(|\alpha(x+1)-\alpha(x)|+|\theta(x+1)-\theta(x)|\)<\infty,
\\
|\alpha(x)-\ai|+|\theta(x)|\to 0,\ |x|\to \infty,\quad \ai\in B_\C(0,1)\setminus\{0\}.
\end{cases}
\end{align}

\begin{remark}
For \eqref{ass:long:pre}, the first assumption means that diagonal entries of $C_{\alpha(x), \beta(x), \theta(x)}$ do not vanish for all $x\in \Z$. The second assumption implies that there exist $\alpha_\pm\in \overline{B_\C(0,1)}$ and $\theta_\pm\in\R$ such that \ $|\alpha(x)-\alpha_\pm|+|\theta(x)-\theta_\pm|\to 0$ as $x\to \pm\infty$. The last assumption is added to ensure $\alpha_+=\alpha_- \in B_\C(0,1)\setminus\{0\}$ and $\theta_+=\theta_-=0$. We remark that there is no loss of generality assuming $\theta_+=0$.
\end{remark}

It is known that all $2\times 2$ unitary matrix can be represented like as $C_{\alpha(x), \beta(x), \theta(x)}$ in \eqref{def:coin:pre}. By the following proposition, we can simplify the form of the coin operator:
\begin{proposition}\label{gauge}
Suppose that $\alpha(x)\in B_{C}(0, 1)$ for all $x\in\Z$. Then, there exists a unitary operator $G\in\mathcal{L}(\mathcal{H})$ such that
\begin{align}
GUG^{-1}=SC_{\alpha'},\quad C_{\alpha'(x)}:=\begin{pmatrix}\sqrt{1-|\alpha'(x)|^2} & \overline{\alpha'(x)} \\ -\alpha'(x) & \sqrt{1-|\alpha'(x)|^2}\end{pmatrix},\quad \alpha'(x)\in B_{\C}(0, 1).
\end{align}
Moreover, under \eqref{ass:long:pre}, it follows that
$$ \displaystyle\sum_{x\in\Z}|\alpha'(x+1)-\alpha'(x)|<\infty,\quad\alpha'(x)\rightarrow \alpha'_\pm\quad \text{as }x\rightarrow\pm\infty,\quad |\alpha'_{+}|=|\alpha'_{-}|=|\alpha_{\infty}|.$$
\end{proposition}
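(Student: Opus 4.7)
The plan is to construct $G$ as multiplication by a position-dependent diagonal unitary $D(x)=\mathrm{diag}(a(x),b(x))$ with $|a(x)|=|b(x)|=1$. This ansatz is essentially forced: if $G$ were instead multiplication by a general $V(x)\in U(2)$, the off-diagonal entries of $V$ would couple the Up and Down components at neighbouring sites after conjugation by $S$, so that $GUG^{-1}$ would fail to be of the form $SC'$ with $C'$ a pointwise coin. A direct expansion of $GSG^{-1}$ and $GCG^{-1}$ then yields $GUG^{-1}=SC'$ with
\[
C'(y)=\begin{pmatrix}\dfrac{a(y+1)}{a(y)}e^{i\theta(y)}\beta(y) & \dfrac{a(y+1)}{b(y)}e^{i\theta(y)}\overline{\alpha(y)} \\ -\dfrac{b(y-1)}{a(y)}e^{i\theta(y)}\alpha(y) & \dfrac{b(y-1)}{b(y)}e^{i\theta(y)}\overline{\beta(y)}\end{pmatrix}.
\]

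Writing $\beta(y)=|\beta(y)|e^{i\psi(y)}$ (well-defined since $|\alpha(y)|<1$ forces $|\beta(y)|>0$), I would impose the two recurrences
\[
\frac{a(y+1)}{a(y)}=e^{-i(\theta(y)+\psi(y))}, \qquad \frac{b(y)}{b(y-1)}=e^{i(\theta(y)-\psi(y))},
\]
normalized by $a(0)=b(0)=1$, so that both diagonal entries of $C'(y)$ reduce to $|\beta(y)|=\sqrt{1-|\alpha(y)|^2}$. Since unitarity of $C(y)$ is preserved under diagonal conjugation, the off-diagonals of $C'(y)$ are automatically of the form $(-\alpha'(y),\overline{\alpha'(y)})$ with $\alpha'(y):=(b(y-1)/a(y))e^{i\theta(y)}\alpha(y)$, and $|\alpha'(y)|=|\alpha(y)|<1$ because the prefactor is unimodular. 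This yields the first assertion.

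For the moreover part, the identity $|\alpha'(y)|=|\alpha(y)|$ immediately gives $|\alpha'_\pm|=|\alpha_\infty|$ once existence of the limits is established. Setting $r(y):=b(y-1)/a(y)$, the recurrences force $r(y+1)=r(y)e^{2i\theta(y)}$, and a short computation gives
\[
\alpha'(y+1)-\alpha'(y)=r(y)e^{i\theta(y)}\bigl(e^{i(\theta(y)+\theta(y+1))}\alpha(y+1)-\alpha(y)\bigr),
\]
whence $|\alpha'(y+1)-\alpha'(y)|\le|\alpha(y+1)-\alpha(y)|+|e^{i(\theta(y)+\theta(y+1))}-1|$. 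The first piece is $\ell^1$ by \eqref{ass:long:pre}. The main obstacle is the second piece: the naive bound $|e^{is}-1|\le|s|$ controls the resulting sum only in terms of $\sum|\theta(y)|$, which is \emph{not} directly provided by the summable-variation assumption alone, so I expect the proof to require an Abel-summation argument using both the bounded variation of $\theta$ and its decay $\theta(x)\to 0$ at $\pm\infty$, with the flexibility that $\alpha'_+$ may differ from $\alpha'_-$ absorbing the accumulated phase from $\prod e^{2i\theta(y)}$.
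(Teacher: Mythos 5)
Your construction of $G$ coincides with the paper's: both take a diagonal unitary $G(x)=\mathrm{diag}(e^{\im g(x)},e^{\im h(x)})$ and impose the same first-order recurrences on $g,h$ (your $\psi$ is the paper's $b=\arg\beta$), which gives $\alpha'(x)=e^{\im\theta'(x)}\alpha(x)$ with $\theta'(x)=h(x-1)-g(x)+\theta(x)$; your verification of the gauge form and of $|\alpha'(x)|=|\alpha(x)|$ is correct.

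On the ``moreover'' part you have, however, located a genuine problem, and your hesitation at the last step is warranted. Eliminating $g,h,\psi$ from the recurrences gives exactly your identity $\theta'(x+1)-\theta'(x)=\theta(x)+\theta(x+1)$, so the phase contribution to $|\alpha'(x+1)-\alpha'(x)|$ is controlled by $\sum_x|\theta(x)|$, not by the total variation of $\theta$. The paper's proof at this very step asserts $|\theta'(x+1)-\theta'(x)|\le 3\,|\theta(x+1)-\theta(x)|$, which is simply false (e.g.\ whenever $\theta(x)=\theta(x+1)\neq 0$). Your proposed Abel-summation rescue does not work either, because under \eqref{ass:long:pre} alone the ``moreover'' assertion is actually false: take $\alpha\equiv\alpha_\infty\in B_{\C}(0,1)\setminus\{0\}$, $\beta(x)>0$, and $\theta(x)=1/\log(|x|+2)$. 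Then $\sum_x|\theta(x+1)-\theta(x)|<\infty$ and $\theta(x)\to 0$, so \eqref{ass:long:pre} holds, yet $\sum_x\theta(x)=\infty$ and hence $\theta'(x)\to\infty$, so $\alpha'(x)=e^{\im\theta'(x)}\alpha_\infty$ has no limits as $x\to\pm\infty$ and $\sum_x|\alpha'(x+1)-\alpha'(x)|=\infty$. The ``moreover'' claim therefore needs the stronger hypothesis $\theta\in l^1(\Z)$. In short, you were right to distrust the naive bound, but the gap you flagged cannot be closed by a cleverer summation argument; the paper's own proof conceals the same gap behind an incorrect inequality.
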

\begin{remark}
In Proposition \ref{gauge}, there is a poosibility of $\alpha'_{+}\neq \alpha'_{-}$ even if we assume $\alpha_{+}=\alpha_{-}$.
\end{remark}

We give proof of Proposition \ref{gauge} in Appendix A. From Proposition \ref{gauge}, it suffices to consider the following situation.

We rewrite the coin operator $C$ as
\begin{align}\label{def:coin}
Cu(x):=C_{\alpha(x)}u(x),\quad C_{\alpha(x)}:=\begin{pmatrix} \rho(x) & \overline{\alpha(x)} \\ -\alpha(x) & \rho(x) \end{pmatrix} \in U(2), \quad \rho(x):=\sqrt{1-|\alpha(x)|^2}.
\end{align}
%By $S$ and $C$, we can define the time evolution operator of a quantum walk (QW) by
%\begin{align}\label{def:U}
%Uu:=SCu.
%\end{align}
%which can be explicitly written as
%\begin{align}\label{U:explicit}
%Uu(x)=\begin{pmatrix} e^{\im \theta(x-1)}\(\beta(x-1)\ur(x-1)+\overline{\alpha(x-1)}\ul(x-1)\) \\ e^{\im \theta(x+1)}\(-\alpha(x+1)\ur(x+1)+\overline{\beta(x+1)}\ul(x+1)\).
%\end{pmatrix}
%\end{align}

%By the gauge transformation, one can reduce the coin operator to the case $\beta(x)>0$.
%From an argument similar to \cite{FO17JFA}, it suffices to consider the case $\beta(x)>0$.

Moreover, we rewrite the condition \eqref{ass:long:pre} as
\begin{align}\label{ass:long}
\begin{cases}
\alpha(x)\in B_{\C}(0, 1),\quad 
\\
\sum_{x\in\Z}|\alpha(x+1)-\alpha(x)|<\infty,\quad \alpha(x)\rightarrow\alpha_{\pm}\quad (\text{as }x\rightarrow\pm\infty), 
\\
|\alpha_{+}|=|\alpha_{-}|,\quad 0<|\alpha_{\pm}|<1.
\end{cases}
\end{align}
In what follows, we consider the time evolution operator $U:=SC=SC_{\alpha}$.

\begin{remark}
\begin{enumerate}
\item The assumption $0<|\alpha_{\pm}|<1$ (also $0\neq \alpha_{\infty}\in B_{\C}(0, 1)$) is needed to construct a modified frequency $\zeta$ which will be introduced in (\ref{pointfreq}). If $|\alpha_{\pm}|=1$, The essential spectrum of $U$ coincides with $\{\im, -\im\}$. In this case, it seems that a quantum walker does not scatter. If $\alpha_{\pm}=0$, the spectrum of $U$ coincides with the unit circle on $\C$. In this case, it seems to be difficult to construct $\zeta_{\alpha}$ since the dispersion relation \eqref{disprel} is quite different from the case of $0<|\alpha_{\pm}|<1$. We left it for future study. 
\item The form of the coin operator \eqref{def:coin} looks like strictly restricted. However, we would like to emphasize that the coin operator \eqref{def:coin:pre} under \eqref{ass:long:pre} is contained in \eqref{def:coin} under \eqref{ass:long}.
\end{enumerate}
\end{remark}

%For the proof, see Lemma 2.1 of \cite{MSSSSdis}.

%For reader's convenience, we will give the proof of Lemma \ref{lem:Gauge} in the appendix of this paper.
%The explicit form of $G$ is given in \eqref{def:G}.

We set $U_{\mathrm{r}}:=SC_{\mathrm{r}}$ and $U_{\mathrm{l}}:=SC_{\mathrm{l}}$, where $C_{\mathrm{r}}$ and $C_{\mathrm{l}}$ are given by
\begin{align}\label{inf:coin}
C_{\mathrm{r}}(x):=\begin{pmatrix}\rho_{\infty} & \overline{\alpha_{+}} \\ -\alpha_{+} & \rho_{\infty}\end{pmatrix}, \quad C_{\mathrm{l}}(x):=\begin{pmatrix}\rho_{\infty} & \overline{\alpha_{-}} \\ -\alpha_{-} & \rho_{\infty}\end{pmatrix},\quad x\in\Z,
\end{align}
where $\rho_{\infty}:=\sqrt{1-|\alpha_{\pm}|^2}$. By the discrete Fourier transformation, it is seen that 
$$ \sigma(U_{\mathrm{r}})=\sigma(U_{\mathrm{l}})=\{e^{\im\lambda}|\ |\cos\lambda|\le \rho_{\infty}\}.$$
From Theorem 2.2 of \cite{RSTI}, it follows that
\begin{align}\label{ess}
\sigma_{\mathrm{ess}}(U)=\sigma(U_{\mathrm{r}})\cup \sigma(U_{\mathrm{l}})=\{e^{\im\lambda}|\ |\cos\lambda|\le \ri\},
\end{align}
where $\sigma_{\mathrm{ess}}(U)$ is the essential spectrum of $U$.

\subsection{Main Results}

Our main result is the following.
\begin{theorem}\label{main:spec}
Assume \eqref{ass:long}.
Let $\sigma_{\mathrm{sc}}(U)$ and $\sigma_{\mathrm{e}}(U)$ be the set of a singular continuous spectrum and eigenvalues of $U$, respectively.
Then, we have
\begin{align}\label{no:sing}
\sigma_{\mathrm{sc}}(U)=\emptyset,
\end{align}
and
\begin{align}\label{no:emb}
\sigma_{\mathrm{e}}(U) \cap \{e^{\im \lambda}\ |\ |\cos \lambda|< \rho_\infty\}=\emptyset.
\end{align}
\end{theorem}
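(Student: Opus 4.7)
The plan is to mimic the one-dimensional long-range Schr\"odinger theory at the level of the transfer matrix for the difference equation $Uu=e^{\im\lambda}u$. First I would rewrite this eigenvalue equation componentwise, using the definitions of $S$ and $C_\alpha$ in \eqref{def:shift} and \eqref{def:coin}, as a two-step recursion that is equivalent to a first-order system $\Psi(x+1)=T_\lambda(x)\Psi(x)$, where the transfer matrix $T_\lambda(x)\in SL(2,\C)$ depends on $\alpha(x)$, $\rho(x)$ and $\lambda$. Letting $\alpha(x)\to\alpha_\pm$ gives asymptotic transfer matrices $T_{\lambda,\pm}$ whose eigenvalues are $e^{\pm\im\zeta_\pm(\lambda)}$, where $\zeta_\pm$ is determined by the dispersion relation $\cos\zeta=\rho_\infty\cos\lambda$ (together with a choice of branch). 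For $\lambda$ in the interior $|\cos\lambda|<\rho_\infty$ this gives genuinely distinct unimodular eigenvalues, which is exactly the regime of \eqref{no:emb}.

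Because $\alpha(x)-\alpha_\pm$ is not summable but $\alpha(x+1)-\alpha(x)$ is, the plain plane-wave ansatz $e^{\pm\im\zeta_\pm x}v_\pm$ cannot match the solutions of the perturbed equation by a summable correction. Instead, I would introduce a pointwise modified frequency $\zeta_\alpha(x,\lambda)$ by solving the dispersion relation with $\alpha_\infty$ replaced by $\alpha(x)$, and define the cumulative phase
\begin{equation*}
\Phi_\pm(x,\lambda):=\sum_{y\gtrless x}\bigl(\zeta_\alpha(y,\lambda)-\zeta_\pm(\lambda)\bigr),
\end{equation*}
which converges summand-by-summand only if handled with Abel summation, so in practice I would build $\Phi_\pm$ by telescoping using the summable differences $\alpha(y+1)-\alpha(y)$. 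The Jost solutions $f_\pm(x,\lambda)$ are then sought in the form $e^{\pm\im(\zeta_\pm x+\Phi_\pm(x,\lambda))}(v_\pm(\lambda)+r_\pm(x,\lambda))$ with $r_\pm(\cdot,\lambda)\to 0$ at $\pm\infty$. Substituting this ansatz into $Uf_\pm=e^{\im\lambda}f_\pm$ and diagonalizing the leading transfer matrix reduces the problem to a discrete Volterra equation for $r_\pm$ whose kernel is controlled by $|\alpha(x+1)-\alpha(x)|$; a standard contraction/iteration argument on $\ell^\infty([N,\infty))$ for $N$ large then produces the desired Jost solutions and shows they are continuous in $\lambda$ on compact subsets of the interior of the essential spectrum.

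With the Jost solutions in hand, \eqref{no:emb} is immediate: any solution of $Uu=e^{\im\lambda}u$ with $|\cos\lambda|<\rho_\infty$ is a linear combination of $f_+$ and an independent solution that grows like the opposite plane wave at $+\infty$; both have constant modulus asymptotics (because the phase correction $\Phi_+$ is real), so no non-trivial solution lies in $\mathcal H$. For \eqref{no:sing} I would use $f_+$ and $f_-$ to write an explicit Green kernel for $(U-z)^{-1}$ in terms of the Wronskian-type determinant $W(\lambda)=\det(f_-(x,\lambda),f_+(x,\lambda))$, continue it to the unit circle away from $\{e^{\im\lambda}:|\cos\lambda|=\rho_\infty\}$ via the constructed continuous boundary values, and extract the limiting absorption principle in weighted spaces $\ell^{2,s}$ with $s>1/2$. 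Standard spectral theory (Putnam/Kato style) then gives purely absolutely continuous spectrum on the interior of $\sigma_{\mathrm{ess}}(U)$, hence $\sigma_{\mathrm{sc}}(U)=\emptyset$ once it is checked that the finitely many threshold points $|\cos\lambda|=\rho_\infty$ and possible embedded thresholds carry no singular continuous mass.

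The main technical obstacle is unquestionably the construction of the modified phase $\Phi_\pm$ and the Jost solutions. The difficulty is that $\alpha(x)$ itself is only a summable-variation sequence with possibly $\alpha_+\neq\alpha_-$, so $\zeta_\alpha(x,\lambda)$ is only slowly varying; one must carefully expand $\zeta_\alpha(x+1,\lambda)-\zeta_\alpha(x,\lambda)$ in powers of $\alpha(x+1)-\alpha(x)$, control the second-order error uniformly for $\lambda$ in compact subsets of the interior where the derivative of the dispersion relation in $\alpha$ is bounded, and only then set up the Volterra iteration. The uniformity of all estimates in $\lambda$ is what ultimately delivers the continuity of the boundary resolvent needed for the absence of singular continuous spectrum.
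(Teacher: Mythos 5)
Your proposal takes essentially the same route as the paper: transfer matrix reduction, a pointwise modified frequency $\zeta_{\alpha(x)}$ obtained from the $x$-dependent dispersion relation, diagonalization of the transfer matrix with the resulting perturbation controlled by $|\alpha(x+1)-\alpha(x)|$, Jost solutions via a discrete Volterra/Neumann iteration, the Wronskian constancy for \eqref{no:emb}, and an explicit Green kernel plus limiting absorption in weighted $\ell^2$ for \eqref{no:sing}. One caution on your formula $\Phi_\pm(x,\lambda)=\sum_{y\gtrless x}\bigl(\zeta_\alpha(y,\lambda)-\zeta_\pm(\lambda)\bigr)$: in the long-range regime this series genuinely diverges (only the \emph{differences} $\zeta_\alpha(y+1,\lambda)-\zeta_\alpha(y,\lambda)$ are summable, not $\zeta_\alpha(y,\lambda)-\zeta_\pm(\lambda)$ itself), and neither Abel summation nor telescoping repairs that; the paper sidesteps the issue by defining the modified phase as the finite cumulative sum $Z(x,\xi)=\sum_{y=0}^{x-1}\zeta(y,\xi)$ from a fixed reference point rather than subtracting the limiting frequency from an infinite tail.
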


\begin{remark}
	From \eqref{no:sing} and \eqref{no:emb}, the interior of essential spectrum of $U$ corresponds to the absolutely continuous spectrum.
\end{remark}

To prove Theorem \ref{main:spec}, we construct modified plane waves of $U$ and apply the limiting absorption principle for unitary operators. 
For short-range cases, such strategy was adapted in \cite{MSSSSdis} (see also \cite{HM}).
In \cite{MSSSSdis}, the idea to construct the plane wave was to rewrite the generalized eigenvalue problem as a difference equation using the transfer matrix (TM), see Proposition \ref{prop:equiv}, below.
Then, considering the TM as the perturbation of the TM of the constant coin case, they were able to construct the solution with the desired asymptotics because the perturbation was $l^1$.

Apparently, for the construction of modified plane waves, considering the TM as a perturbation of the TM with a constant coin is not sufficient.
Indeed, to construct the modified phase $Z$ (see \eqref{eq:modphase}), we need to diagonalize the transfer matrix at each point sufficiently near infinity and subtract the remainder to make the perturbation be in $l^1$.
This simple but new strategy requires detailed analysis for the  dispersion relation of QWs and more than half of the proof of this paper is devoted to such analysis.
We note that this detailed analysis of dispersion relation would be useful for studying other properties of QWs such as scattering and inverse scattering theory. In particular, Jost functions constructed in this paper will be applied to establish the weak limit theorem for quantum walks with general long-range coins which contain a result \cite{W}. 
After obtaining the estimates of the dispersion relation, the construction of modified plane waves will be more or less similar to \cite{MSSSSdis}.

The rest of this paper is organized as follows. In Sections \ref{subsec:tm} and \ref{subsec:constant}, we introduce the TM from the generalized eigenvalue problem and study the dispersion relation. In Section \ref{subsec:modwave}, we construct modified plane waves which are solutions to the generalized eigenvalue problem. In Section \ref{sec:prmain}, we prove the main result by using modified plane waves and the limiting absorption principle. In Appendix A, we give proof of Proposition \ref{gauge}.

\section{Generalized eigenvalue problem}\label{sec:genev}
In this section, we study the following (generalized) eigenvalue problem:
\begin{align}\label{ev:prob}
U u = e^{\im \lambda}u,\quad \lambda\in \C/2\pi\Z,
\end{align}
where $u:\Z\to \C^2$ is not necessarily in $\mathcal H=l^2(\Z,\C^2)$.

\subsection{Transfer matrix}\label{subsec:tm}

We recall several results from \cite{MSSSSdis}.
We set a unitary operator $\jve$ by
\begin{align}\label{def:jev}
\jve u(x):=\begin{pmatrix} \ul(x-1) \\ \ur(x) \end{pmatrix}.
\end{align}
We denote the inverse of $\jve$ by 
\begin{align}\label{def:jve}
\jev u(x):=\jve^{-1} u(x)=\begin{pmatrix}\ul(x) \\ \ur(x+1)\end{pmatrix}.
\end{align}
For $\lambda\in  \C/2\pi\Z$ and $x\in\Z$, we set
\begin{align}\label{def:tla}
T_\lambda(x):=\rho(x)^{-1}\begin{pmatrix}  e^{\im \lambda} & \alpha(x) \\ \overline{\alpha(x)} & e^{-\im \lambda} \end{pmatrix}.
\end{align}

The generalized eigenvalue problem \eqref{ev:prob} can be rewritten using transfer matrix $T_{\lambda}$.

\begin{proposition}[Proposition 3.2 of \cite{MSSSSdis}]\label{prop:equiv}
For $u:\Z\to \C^2$, the generalized eigenvalue problem \eqref{ev:prob} is equivalent to
\begin{align}\label{eq:equiv}
\(\jve u\)(x+1)=T_\lambda(x)\(\jve u\)(x),\ \forall x\in\Z.
\end{align}

\end{proposition}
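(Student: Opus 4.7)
The plan is a direct componentwise computation; no new idea is needed, only careful bookkeeping of the index shifts induced by $S$.

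First, I would expand the eigenvalue equation $(SCu)(x) = e^{\im\lambda}u(x)$ using \eqref{def:shift} and \eqref{def:coin}. Reading off the up-component at $x$ and the down-component at $x$, then reindexing the down-component by $x\mapsto x-1$, produces the pair of scalar equations
\begin{align*}
\rho(x-1)\ur(x-1) + \overline{\alpha(x-1)}\ul(x-1) &= e^{\im\lambda}\ur(x),\\
-\alpha(x)\ur(x) + \rho(x)\ul(x) &= e^{\im\lambda}\ul(x-1).
\end{align*}
The second equation, solved for $\ul(x)$ after dividing by $\rho(x)$ (nonzero because $|\alpha(x)|<1$), is precisely the first row of the asserted transfer-matrix recursion $(\jve u)(x+1) = T_\lambda(x)(\jve u)(x)$.

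For the second row, I would shift the first equation to $x+1$, substitute the just-derived expression for $\ul(x)$, and collect the $\ur(x)$ coefficient using the normalization $\rho(x)^2 + |\alpha(x)|^2 = 1$. The terms $\rho(x)^2$ and $|\alpha(x)|^2$ combine to $1$, which is exactly what produces the $e^{-\im\lambda}$ entry of $T_\lambda(x)$ after dividing by $\rho(x)$; the cross term yields the $\overline{\alpha(x)}$ entry. The converse implication is the same manipulation read backwards: multiplying the first row of the recursion by $\rho(x)$ recovers the down-component equation, and combining both rows with the same identity $\rho(x)^2+|\alpha(x)|^2=1$ recovers the up-component equation.

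Since this is a routine algebraic verification, there is no real obstacle. The one conceptual point worth highlighting is that the map $\jve$ pairs $\ul(x-1)$ with $\ur(x)$ precisely because $S$ shifts these two components in opposite directions, so these are the naturally synchronized variables for a first-order recurrence; once this pairing is recognized, the transfer-matrix form is essentially forced by the eigenvalue equation.
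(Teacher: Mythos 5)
Your proof is correct; the componentwise expansion, the reindexing of the down-component equation by $x\mapsto x-1$, the use of $\rho(x)^{2}+|\alpha(x)|^{2}=1$ to combine terms into the $e^{-\im\lambda}$ entry, and the reversibility of each step together establish the stated equivalence. Note that the paper itself gives no proof of this proposition (it is cited verbatim from Proposition 3.2 of \cite{MSSSSdis}), so there is nothing in the present text to compare against; your argument is the standard direct verification one would expect.
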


%\begin{proof}
%Writing down the equation $Uu=e^{\im \lambda}u$ explicitly, we have
%\begin{equation}\label{evprobel}
%\begin{aligned}
%e^{\im \theta(x-1)}\rho (x-1)\ur(x-1)+ e^{\im  \theta(x-1)}  \overline{\alpha}(x-1)\ul(x-1)&=e^{\im  \lambda}\ur(x),\\
%-e^{\im  \theta(x+1)} \alpha (x+1)\ur(x+1)+e^{\im  \theta(x+1)}  \rho (x+1)\ul(x+1)&=e^{\im  \lambda}\ul(x).
%\end{aligned}
%\end{equation}
%Rearranging \eqref{evprobel}, we have
%\begin{align}\label{evprobel2}
%\begin{pmatrix} 0 & e^{\im  \theta(x)}\rho (x)\\ -e^{\im  \lambda} & -e^{\im  \theta(x)} \alpha (x)\end{pmatrix} 
%\begin{pmatrix} \ul(x-1) \\ \ur (x)\end{pmatrix} = \begin{pmatrix} -e^{\im \theta(x)}  \overline{\alpha}(x)& e^{\im \lambda} \\ -e^{\im \theta(x)} \rho (x) & 0 \end{pmatrix}
% \begin{pmatrix} \ul(x) \\ \ur (x+1)\end{pmatrix}.
%\end{align}
%Thus, we obtain
%\begin{align*}
%\begin{pmatrix} \ul(x) \\ \ur (x+1)\end{pmatrix}=  \rho (x)^{-1}\begin{pmatrix} e^{\im ( \lambda-\theta(x))} &  \alpha (x)\\  \overline{\alpha}(x) &  e^{-\im (\lambda-\theta(x))}\end{pmatrix}
%\begin{pmatrix} \ul(x-1) \\ \ur (x)\end{pmatrix}.
%\end{align*}
%This is the explicit form of \eqref{eq:equiv}. Since all the manipulation we have done can be reversed, we have the conclusion.
%\end{proof}

%For a $2\times 2$ matrix $A$, we denote the determinant of $A$ by $\mathrm{det}(A)$. 
%For $v_1=\jve u_1$ and $\ v_2=\jve u_2$ which satisfy \eqref{eq:equiv}, we define the Wronskian as $\mathrm{det}(v_{1}\ v_{2})(x):=\mathrm{det}(v_{1}(x)\ v_{2}(x))$, $x\in\mathbb{Z}$.

As for ordinary differential equations, the Wronskian is invariant. The same property also follows in the present case.

\begin{proposition}[Proposition 3.3 of \cite{MSSSSdis}]\label{prop:Wronsky}
Let $v_1=\jve u_1$ and $\ v_2=\jve u_2$ satisfy \eqref{eq:equiv}.
Then, $\det (v_1(x)\ v_2(x))$ is independent of $x\in\Z$.
\end{proposition}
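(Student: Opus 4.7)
The plan is to reduce the Wronskian invariance to a one-line computation using the multiplicativity of the determinant, once we know that the transfer matrix $T_\lambda(x)$ has determinant one.

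First I would verify that $\det T_\lambda(x) = 1$ for every $x \in \Z$. Directly from \eqref{def:tla},
\begin{align*}
\det T_\lambda(x) = \rho(x)^{-2}\bigl(e^{\im \lambda} e^{-\im \lambda} - \alpha(x)\overline{\alpha(x)}\bigr) = \rho(x)^{-2}\bigl(1 - |\alpha(x)|^2\bigr) = 1,
\end{align*}
where the last equality uses $\rho(x)^2 = 1 - |\alpha(x)|^2$ from \eqref{def:coin}. Notably this identity holds for every $\lambda \in \C/2\pi\Z$ (not only on the unit circle), which is consistent with the generalized eigenvalue setting.

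Next I would set $W(x) := \det\bigl(v_1(x)\ v_2(x)\bigr)$ and, using \eqref{eq:equiv} applied to both columns, write
\begin{align*}
W(x+1) = \det\bigl(T_\lambda(x) v_1(x)\ T_\lambda(x) v_2(x)\bigr) = \det T_\lambda(x) \cdot W(x) = W(x).
\end{align*}
Hence $W$ is constant in $x \in \Z$, which is exactly the claim.

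There is no genuine obstacle here; the only point one needs to check carefully is that $\rho(x) \neq 0$ so that $T_\lambda(x)$ is well-defined, but this is guaranteed by the standing assumption $\alpha(x) \in B_\C(0,1)$ in \eqref{ass:long}. The argument is the discrete analogue of Abel's identity, with the role of the trace-free condition for ODEs replaced here by the unimodular-determinant condition for the difference equation.
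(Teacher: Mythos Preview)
Your argument is correct and is the standard one: the paper itself does not supply a proof here (the proposition is quoted from \cite{MSSSSdis}), but it uses the key identity $\det T_\lambda(x)=1$ explicitly in the very next lemma, so your computation is fully aligned with the paper's setup.
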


%\begin{proof}
%Since $\det T_\lambda(x)=1$, we have
%\begin{align*}
%\det (v_1(x+1)\ v_2(x+1))&=\det (T_\lambda(x)v_1(x)\ T_\lambda(x)v_2(x)) =\det T_\lambda(x)\det (v_1(x)\ v_2(x))\\&=\det (v_1(x)\ v_2(x)).
%\end{align*}
%Therefore, we have the conclusion.
%\end{proof}

Corollary \ref{cor:Wron} below will be used to exclude the existence of embedded eigenvalues.

%An immediate corollary of Proposition \ref{prop:Wronsky} is the following.
\begin{corollary}[Corollary 3.4 of \cite{MSSSSdis}]\label{cor:Wron}
Suppose that $u_1$ and $u_2$ satisfy \eqref{ev:prob}, $u_1$ is bounded on $\Z_{\geq 0}:=\{x\in\Z\ |\ x\geq 0\}$ and $u_2\in \mathcal H$.
Then, $u_1$ and $u_2$ are linearly dependent.
\end{corollary}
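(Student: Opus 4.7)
The plan is to exploit the invariance of the Wronskian (Proposition \ref{prop:Wronsky}) together with the $l^2$ decay of $u_2$. First, I would set $v_i := \jve u_i$ for $i=1,2$; by Proposition \ref{prop:equiv} each satisfies the transfer matrix recursion $v_i(x+1)=T_\lambda(x) v_i(x)$. Proposition \ref{prop:Wronsky} then gives that $W(x):=\det(v_1(x)\ v_2(x))$ is constant in $x$.

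Next I would take $x\to +\infty$. Since $u_2\in \mathcal H=l^2(\Z,\C^2)$, the definition of $\jve$ immediately gives $v_2\in l^2(\Z,\C^2)$ as well, hence $\|v_2(x)\|\to 0$ as $x\to+\infty$. Combined with the boundedness of $v_1$ on $\Z_{\geq 0}$ (inherited from that of $u_1$), the Hadamard-type bound $|W(x)|\leq \|v_1(x)\|\,\|v_2(x)\|$ forces $W(x)\to 0$. Because $W$ is constant, $W\equiv 0$ on $\Z$, so $v_1(x)$ and $v_2(x)$ are linearly dependent as vectors in $\C^2$ at every $x$.

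Finally, I would upgrade this pointwise dependence to a linear dependence of the sequences themselves. If $v_2\equiv 0$, then $u_2=0$ and the statement is trivial. Otherwise, pick $x_0$ with $v_2(x_0)\neq 0$ and choose $c\in\C$ such that $v_1(x_0)=c\,v_2(x_0)$. The sequence $w:=v_1-c v_2$ satisfies the same recursion $w(x+1)=T_\lambda(x) w(x)$ and vanishes at $x_0$. A short computation shows $\det T_\lambda(x)=\rho(x)^{-2}(1-|\alpha(x)|^2)=1$, so each $T_\lambda(x)$ is invertible and the recursion is uniquely solvable in both directions; therefore $w\equiv 0$. Applying $\jve^{-1}$ yields $u_1=c\,u_2$ on $\Z$.

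The only conceptual step is the vanishing of the Wronskian via the $l^2$ decay at $+\infty$; the remainder is routine bookkeeping, and I do not anticipate any serious obstacle.
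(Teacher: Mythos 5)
Your proof is correct, and it takes exactly the approach that the paper's setup is pointing at: the paper imports this corollary from \cite{MSSSSdis} without reproducing a proof, but it states Proposition~\ref{prop:Wronsky} (constancy of the Wronskian) immediately beforehand precisely so that this corollary follows by the argument you give. Your use of the Hadamard bound $|\det(v_1\ v_2)|\le\|v_1\|\,\|v_2\|$, the $l^2$-decay of $v_2$ at $+\infty$, and the invertibility of $T_\lambda(x)$ (from $\det T_\lambda(x)=1$) to propagate $w(x_0)=0$ to all of $\Z$ are all sound; the only cosmetic remark is that $u_1$ bounded on $\Z_{\geq 0}$ gives $v_1$ bounded on $\Z_{\geq 1}$ rather than $\Z_{\geq 0}$ (since $v_1(0)$ involves $u_{1,\downarrow}(-1)$), which is of course enough.
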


%\begin{proof}
%Since $\jve u_2\in l^2(\Z,\C^2)$, there exists a sequence $\{x_{n}\}_{n=1}^{\infty}\subset \Z_{\ge 0}$ such that $x_n\to \infty$ and \ $\|\(\jve u_2\)(x_n)\|\to 0$ as $n\to \infty$.
%Thus, we have
%\begin{align*}
%\det(\(\jve u_1\)(x), \(\jve u_2\)(x))=\lim_{x\to \infty}\det(\(\jve u_1\)(x_n), \(\jve u_2\)(x_n))=0.
%\end{align*}
%Therefore, we have the conclusion.
%\end{proof}

The kernel of the resolvent of $U$ can be written down using generalized eigenfunctions.
For $v\in \C^2$ (column vector), we set $v^\top$ (row vector) to be the transposition of $v$.

\begin{proposition}[Proposition 3.6 of \cite{MSSSSdis}]\label{prop:kernel}
Let $v_1=\jve u_1$ and $v_2=\jve u_2$ satisfy \eqref{eq:equiv} with $W_\lambda=\det(v_1\ v_2)\neq 0$.
We set
\begin{align*}
K_\lambda(x,y):=e^{-\im \lambda} W_\lambda^{-1} \(v_2(x) v_1(y)^\top \begin{pmatrix} 0 & 1_{<y}(x)\\ 1_{\leq y}(x) & 0 \end{pmatrix} + v_1(x) v_2(y)^\top \begin{pmatrix} 0 & 1_{\geq y}(x)\\ 1_{> y}(x) & 0 \end{pmatrix} \).
\end{align*}
Then, we have
\begin{align}
\(\jve (U-e^{\im \lambda}) \jev K_\lambda(\cdot,y)\)(x)
=\begin{cases} \begin{pmatrix}1& 0 \\ 0 & 1 \end{pmatrix} & x=y,\\ 0 & x\neq y. \end{cases}
\end{align}
\end{proposition}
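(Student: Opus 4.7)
The plan is to construct $K_\lambda$ as the Green's function of the first-order $2\times 2$ difference system furnished by Proposition \ref{prop:equiv}, following the standard variation-of-parameters template built from two independent homogeneous solutions $v_1,v_2$; the only nonstandard feature is the staggering inherent to $\jev$ (it places $\ul$ at $x$ and $\ur$ at $x+1$), and the off-diagonal indicator matrices in the formula for $K_\lambda$ are exactly what is needed to absorb it.

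First, I would translate the identity to be verified. Applying $\jve$ on the left and reading through $\jev$, the task reduces to checking that the operator $v\mapsto v(x+1)-T_\lambda(x)v(x)$ applied to $F(\cdot,y):=K_\lambda(\cdot,y)$ produces $\delta_{xy}I_2$ up to a normalizing factor $e^{\im\lambda} W_\lambda$; Proposition \ref{prop:equiv} is exactly what identifies $\jve(U-e^{\im\lambda})\jev$ with this difference operator. Second, I would decompose $K_\lambda(\cdot,y)$ into a ``forward'' piece supported on $x\geq y$ proportional to $v_1(x)$ and a ``backward'' piece supported on $x\leq y$ proportional to $v_2(x)$, each multiplied by a $y$-dependent row built from the other solution. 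Away from $x=y$ each piece is a pure homogeneous solution and is killed by the transfer operator. At $x=y$, the jump of the indicator functions produces a residue which, using the Wronskian identity $\det(v_1(y)\ v_2(y))=W_\lambda$ (Proposition \ref{prop:Wronsky}) and Cramer's rule for the inverse of $(v_1(y)\ v_2(y))$, collapses to precisely $I_2$; this fixes the prefactor $e^{-\im\lambda} W_\lambda^{-1}$ and the asymmetric split into $v_1(y)^\top$ and $v_2(y)^\top$ rows. Constancy of $W_\lambda$ in $x$ ensures the same normalization works for every $y$.

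The main obstacle is the careful case analysis at the boundary $x=y$: because the two components of $(\jev w)(x)$ live at different lattice sites ($w_\downarrow$ at $x$ and $w_\uparrow$ at $x+1$), the step functions controlling the up- and down-components must jump at slightly shifted positions, which is exactly the asymmetry between $1_{\leq y}$ and $1_{<y}$ (and between $1_{\geq y}$ and $1_{>y}$) that appears in $M_1,M_2$. Checking that, with these specific indicators, the contributions in both components at $x=y$ combine into $I_2$ while all other contributions cancel is the sole genuine computation; once this is verified, the off-$y$ cases are a straightforward telescoping using the fact that $v_1,v_2$ solve the homogeneous transfer equation. Everything else is routine Green's function bookkeeping.
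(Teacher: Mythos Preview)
The paper does not supply its own proof of this proposition; it is quoted from \cite{MSSSSdis} (their Proposition~3.6) without argument. Your variation-of-parameters outline is the standard and correct route, and your identification of the staggering in $\jev$ as the reason for the asymmetric indicators $1_{\leq y}/1_{<y}$ versus $1_{>y}/1_{\geq y}$ is exactly right.

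One point deserves sharpening. You write that Proposition~\ref{prop:equiv} identifies $\jve(U-e^{\im\lambda})\jev$ with $v\mapsto v(x+1)-T_\lambda(x)v(x)$ ``up to a normalizing factor $e^{\im\lambda}W_\lambda$''. That is not correct as an operator identity: Proposition~\ref{prop:equiv} only matches the \emph{kernels}. A direct computation from the definitions of $S$, $C$, $\jve$, $\jev$ yields
\[
\bigl(\jve(U-e^{\im\lambda})\jev\, v\bigr)(x)
=\begin{pmatrix}
\rho(x)\,\bigl[v(x+1)-T_\lambda(x)v(x)\bigr]_\uparrow\\[1ex]
\rho(x-1)\,\bigl[v(x-1)-T_\lambda(x-1)^{-1}v(x)\bigr]_\downarrow
\end{pmatrix},
\]
so the two components sample the transfer recursion at \emph{different} sites, with position-dependent weights $\rho(x)$ and $\rho(x-1)$ rather than a single global scalar. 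You clearly anticipate this staggering, and with this explicit form in hand your case analysis goes through exactly as you describe: away from $x=y$ each column of $K_\lambda(\cdot,y)$ is a scalar multiple of a single $v_j$ at all sites entering the formula and is therefore annihilated; at $x=y$ the jump contributions combine via $v_{1,\uparrow}(y)v_{2,\downarrow}(y)-v_{2,\uparrow}(y)v_{1,\downarrow}(y)=W_\lambda$, the factors $\rho$ cancel against those produced by $T_\lambda$, and the prefactor $e^{-\im\lambda}W_\lambda^{-1}$ leaves precisely $I_2$. So your plan is sound; just replace the appeal to Proposition~\ref{prop:equiv} by the explicit two-component identity above.
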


\subsection{Dispersion relation}\label{subsec:constant}
We analyze $T_\lambda(x)$ starting from investigating eigenvalues. 
\begin{lemma}
Two eigenvalues of $T_\lambda(x)$ can be written as $e^{\pm \im \xi}$ for $\xi \in \C/2\pi\Z$ with $\Im \xi\geq 0$ satisfying the relation:
\begin{align}\label{disprel}
\rho(x) \cos \xi = \cos \lambda.
\end{align}
\end{lemma}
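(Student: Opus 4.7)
The statement is really about computing the characteristic polynomial of the $2\times 2$ matrix $T_\lambda(x)$ and then expressing its roots in a convenient trigonometric form, so my plan is straightforward.

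First I would compute the determinant and trace of $T_\lambda(x)$ directly from \eqref{def:tla}. Using the identity $\rho(x)^2=1-|\alpha(x)|^2$ coming from \eqref{def:coin}, the determinant evaluates to
\begin{align*}
\det T_\lambda(x)=\rho(x)^{-2}\bigl(e^{\im\lambda}e^{-\im\lambda}-|\alpha(x)|^2\bigr)=\rho(x)^{-2}\bigl(1-|\alpha(x)|^2\bigr)=1,
\end{align*}
while the trace is
\begin{align*}
\mathrm{tr}\,T_\lambda(x)=\rho(x)^{-1}\bigl(e^{\im\lambda}+e^{-\im\lambda}\bigr)=2\rho(x)^{-1}\cos\lambda.
\end{align*}

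Next, since the two eigenvalues $\mu_1,\mu_2$ of $T_\lambda(x)$ satisfy $\mu_1\mu_2=1$, they are reciprocals of each other. Hence there exists $\xi\in\C/2\pi\Z$ such that $\mu_1=e^{\im\xi}$ and $\mu_2=e^{-\im\xi}$. Substituting into the trace identity yields $e^{\im\xi}+e^{-\im\xi}=2\rho(x)^{-1}\cos\lambda$, i.e.\ $\cos\xi=\rho(x)^{-1}\cos\lambda$, which is exactly the dispersion relation \eqref{disprel}.

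Finally, I would note that the pair $\{e^{\im\xi},e^{-\im\xi}\}$ is unchanged under $\xi\mapsto -\xi$, so of the two solutions $\pm\xi$ of $\cos\xi=\rho(x)^{-1}\cos\lambda$ in $\C/2\pi\Z$ one has nonnegative imaginary part, which fixes the normalization $\Im\xi\ge 0$. There is no real obstacle in this lemma; the only mildly nontrivial observation is that $\det T_\lambda(x)=1$ follows from the unitarity constraint $\rho^2+|\alpha|^2=1$, which is what allows the eigenvalues to be written as a reciprocal pair on the unit-modulus form $e^{\pm\im\xi}$ (with $\xi$ possibly complex when $|\cos\lambda|>\rho(x)$).
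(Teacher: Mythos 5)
Your proof is correct and follows essentially the same route as the paper's: compute $\det T_\lambda(x)=1$ to write the eigenvalues as the reciprocal pair $e^{\pm\im\xi}$, then match the trace to obtain $\rho(x)\cos\xi=\cos\lambda$, with $\Im\xi\ge 0$ chosen by the sign freedom in $\xi$. You simply spell out the determinant and trace computations that the paper leaves implicit.
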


\begin{proof}
Since $\det T_\lambda(x)=1$, we can express two eigenvalues of $T_{\lambda}(x)$ as $e^{\pm \im \xi}$.
Further, without loss of generality, we can assume that $\Im \xi\geq 0$.
Finally comparing the trace of $T_{\lambda}(x)$ and the sum of $e^{\pm\im\xi}$, we obtain \eqref{disprel}.
\end{proof}

Formally, we can express $\lambda$ as $\lambda=\mathrm{Arccos}\(\rho \cos \xi\)$.
Since it is an analytic function, it can be defined on its Riemann surface $\mathcal R$.
Since $\rho(x)$ depends on $x$ in general, the Riemann surface will depend on $x$.

For $b>0$, we set 
\begin{align*}
\T_{b}:=\{z\in \C/2\pi\Z\ |\ \ 0<\Im z<b\}, \ \T_{\Re}:=\{z\in \C/2\pi\Z\ |\ \Im z=0\},
\end{align*}
and
\begin{align}
\T_{b}^j:=\{(j,z)\ |\ z\in \T_b\},\ \T_{\Re}^j=\{(j, z)\ |\ z\in \T_{\Re}\},\quad j=1, 2.
\end{align}
We set 
\begin{align}\label{def:Rb}
\mathcal R_{b}:=\T_b^1 \sqcup \T_b^2,\quad \mathcal R_{\Re}:=\T_{\Re}^1\sqcup \T_{\Re}^2,\quad \mathcal R_E:=\{(j, 0),\ (j, \pi)|\ j=1, 2\}.
\end{align}

\begin{remark}
The element of $\overline{\mathcal R_b}$ (closure of $\mathcal R_b$) should be expressed as $(\mathfrak s, z)$ where $\mathfrak s\in \{1, 2\}$.
We will sometimes use the notation $z_{\mathfrak s}:=(\mathfrak s,z)$.
Further, by abuse of notations, we express $z_{\mathfrak s}$ as $z$ with $z\in \T_b$ and comment on which sheet (i.e.\ $\mathfrak s=1$ or 2) is $z$ in if necessary.
For $\xi_j=(j,\xi)\in \overline{\mathcal R_b}$, we define $e^{\im \xi_j}:=e^{\im \xi}$ and also define trigonometric functions in $\overline{\mathcal R_b}$ by
\begin{align}\label{def:tri}
\cos \xi_{j}:=\cos \xi,\quad \sin \xi_{j}:=\sin \xi.
\end{align}
\end{remark}

Recall that the real analytic function $\left.\mathrm{Arccos}\right|_{(-1,1)}=\(\cos|_{(0,\pi)}\)^{-1}$ can be extended analytically in $\C\setminus \((-\infty,-1]\cup [1,\infty)\)$.
By the formula
\begin{align}\label{complexcos}
\cos \xi = \cos \xi_R \cosh \xi_I -\im \sin \xi_R \sinh \xi_I,\ \xi=\xi_R+\im \xi_I\in\C,
\end{align}
it is seen that $\cos(\cdot)$ maps $\{u+\im v\ | 0< u< \pi, 0<\pm v\}$ to $\mathbb{C}_{\mp}:=\{z\in\mathbb{C}| \mp \Im z>0\}$. Thus $\mathrm{Arccos}(\cdot)$ maps $\mathbb{C}_{\pm}$ to $\{u+\im v\ | 0\le u\le \pi, 0<\mp v\}$. 
\\

We recall that the condition  $0<|\alpha_{\pm}|<1$ is assumed in \eqref{ass:long}. By this, we can take sufficiently large $r>0$ such that for any $x\in\Z$ with $|x|\ge r$, we have $\delta<|\alpha(x)|<1-\delta$ for some $\delta>0$. This condition near $\pm\infty$ is important to introduce $\zeta$ in $\eqref{pointfreq}$. In a region $(-r, r)\cap\Z$, there is no problem if there exists $x_{0}\in (-r_{0}, r_{0})\cap \Z$ such that $\alpha(x_{0})=0$. However,  in what follows, we assume that $0<|\alpha(x)|<1$ for any $x\in\Z$ for simplicity.

For given $\alpha\in B_{\C}(0,1)\setminus\{0\}$, we set $b_\alpha>0$ to be the solution of $$\cosh b_\alpha =\rho^{-1}=(1-|\alpha|^2)^{-1/2}.$$ 
We notice that $b_{\alpha_{+}}=b_{\alpha_-}$ since $|\alpha_{+}|=|\alpha_{-}|$. For $\xi \in \T_{b_\alpha}$, we have $-1<\Re \(\rho \cos \xi\)<1$ by (\ref{complexcos}).
For $0<b< b_{\alpha}$, we define the function $\lambda_{\alpha}:\overline{\mathcal R_{b}}\to \C/2\pi\Z$ by
\begin{align}\label{def:lambda}
\lambda_{\alpha}(\xi):=(-1)^{j-1}\mathrm{Arccos}\(\rho\cos \xi\), \quad \xi\in \overline{\T_b^j}\quad j=1, 2.
\end{align}
For $0<b< b_{\alpha_\pm}$, we also define $\lambda_{\infty}:\overline{\mathcal R_{b}}\to \C/2\pi\Z$ by
\begin{align}\label{def:lambdainf}
\lambda_{\infty}(\xi):=\lambda_{\alpha_{\pm}}(\xi)=(-1)^{j-1}\mathrm{Arccos}\(\rho_{\infty}\cos \xi\), \quad \xi \in \overline{\T_{b}^{j}}\quad j=1, 2.
\end{align}
Since it is obvious that $\lambda_{\alpha_{+}}(\xi)=\lambda_{\alpha_{-}}(\xi)$, we unify them and write as $\lambda_{\infty}$. 
\begin{remark}
From the relation $\rho(x):=\sqrt{1-|\alpha(x)|^2}$, $\rho(x)$ is automatically determined by $\alpha(x)$. Therefore, we emphasize dependence on $\alpha$ rather than $\rho$ in \eqref{def:lambda} and \eqref{def:lambdainf}. As seen in below, we will derive several estimates by using differences of $\alpha$'s. 
\end{remark}
\begin{remark}
If $b=b_{\alpha}$, the curve of $\mathrm{Arccos}\(\rho \cos(x+\im b_{\alpha})\)$ with $-\pi < x\le \pi$ touches 0 and $\pi$. This means that $\lambda_{\alpha}$ loses injectivity. To avoid two points 0 and $\pi$, we have to take $b$ satisfying $0<b<b_{\alpha}$ (See also Figure 1). 
\end{remark}
\vspace{3mm}
\begin{center}
%\hspace{-50mm}\includegraphics[bb=0 0 300 150, width=5cm]{Figure1.png}
%\includegraphics[width=0.8\textwidth,natwidth=610,natheight=642]{Figure2.pdf}
\input{figure1.tex}
\\
Figure 1. Ranges of $\{ x+ib_{j}| -\pi\le x\le \pi\}$ by $\mathrm{Arccos}\(\rho \cos (\cdot)\)$ with $0<b_{1}<b_{2}<b_{3}=b_{\alpha}$
\end{center}

\begin{remark}
$\mathcal R_{b_{\alpha}}$ is a subset of the Riemann surface $\mathcal R$ of $\lambda_{\alpha}(\xi)=(-1)^{j-1}\mathrm{Arccos}\(\rho \cos \xi\)$, which consists of two sheets.
If one considers $\lambda_{\alpha}$ as an analytic extension of $\lambda_{\alpha}$ initially defined on $\T_{\Re}^1$, $\lambda_{\alpha}$ will correspond to the above definition on $\mathcal R_{b_{\alpha}}$. 
\end{remark}

For $0<b< b_\alpha$, we set
\begin{align}\label{def:D}
\mathcal D_{\alpha,b}:=\lambda_{\alpha}(\mathcal R_b)=\{\lambda_{\alpha}(\xi)\ |\ \xi\in \mathcal R_b\} \subset \C/2\pi\Z,\quad \mathcal E_{\alpha}:=\lambda_{\alpha}(\mathcal R_E),
\end{align}
and
\begin{align}\label{def:barD}
\overline{\mathcal D_{\alpha,b}}:=\mathcal D_{\alpha,b}\cup\partial \mathcal D_{\alpha,b}.
\end{align}
We devide $\partial \mathcal{D}_{\alpha, b}$ as $\partial \mathcal D_{\alpha,b}:= \partial \mathcal D_{\alpha}\cup \partial \mathcal D_{\alpha,b}^{\mathrm{out}}$, where
\begin{align*}
\partial\mathcal{D}_{\alpha}&:=\cup_{\mathfrak s_1\in\{1, 2\}, \mathfrak s_2 \in \{\pm\}}\partial \mathcal D_{\alpha}^{\mathfrak s_1,\mathfrak s_2},\\
\partial \mathcal D_{\alpha}^{\mathfrak s_1,\pm}&:=\{\lambda_{\alpha}(\xi)\ |\ \xi\in \T^{\mathfrak s_1}_{\Re},\ \mp\xi\in[0,\pi]\},\quad \mathfrak s_{1}=1, 2, \\
%\partial \mathcal D_{\alpha}^{\mathfrak s_1,\pm}&:= \begin{cases}\{ \lambda\pm \im 0\ |\ \mathrm{Arccos}(\rho)\leq \lambda\leq   \ \mathrm{Arccos}(-\rho)\} & \mathfrak s_1=+,\\
%\{ \lambda\pm \im 0\ |\ -\mathrm{Arccos}(-\rho)\leq \lambda\leq  -\mathrm{Arccos}(\rho)\} & \mathfrak s_1=-,\end{cases}\\
\partial \mathcal D_{\alpha,b}^{\mathrm{out}} &:=\{\lambda_{\alpha}(\xi)\ |\ \Im \xi =b\}\cup\{\lambda_{\alpha}(\xi)|\ \xi=(0+\im\tau)_{j},\ (\pi+\im\tau)_{j},\ 0< \tau< b,\ j=1, 2\}.
\end{align*}
For $\lambda_{\infty}$, we write $\mathcal{D}_{\infty, b}$ instead of $\mathcal{D}_{\alpha_{\pm}, b}$ for instance. The other notations are also similar.

For the figure of $\partial\mathcal{D}_{\alpha}^{\mathfrak{s}_{1}, \pm}$ and $\partial\mathcal{D}_{\alpha, b}^{\mathrm{out}}$, see Figure 2, 3, 4 and 5.

\begin{center}
\input{figure2.tex}
\end{center}
\begin{center}
Figure 2. Domain and image of $\lambda_{\alpha}$ with $b<b_{\alpha}$. Insides of ellipses correspond to $\mathcal{D}_{\alpha, b}$. 
\end{center}

\begin{center}
%\hspace{-80mm}\includegraphics[bb=0 0 300 150, width=5cm]{Figure3.png}
\input{figure3.tex}
\end{center}
\begin{center}
Figure 3. Images of $[-\pi, 0]$ and $[0, \pi]$ by $\lambda_{\alpha}$ with $b<b_{\alpha}$. These images correspond to $\partial\mathcal{D}_{\alpha}^{\mathfrak{s}_{1}, \pm}$ ($\mathfrak{s}_{1}=1, 2$) and their union is $\partial \mathcal{D}_{\alpha}$.
\end{center}
\begin{center}
%\hspace{-80mm}\includegraphics[bb=0 0 300 150, width=5cm]{Figure3.png}
\input{figure4.tex}
\end{center}
\begin{center}
Figure 4. Curves of $\{\lambda_{\alpha}(\xi)|\ \mathrm{Im}\ \xi=b\}$ which are parts of $\partial\mathcal{D}^{\mathrm{out}}_{\alpha, b}$.
\end{center}
\begin{center}
\input{figure5.tex}
\end{center}
\begin{center}
Figure 5. Images of $\{\lambda_{\alpha}(\xi)|\ \xi=(0+\im\tau)_{j},\ (\pi+\im\tau)_{j},\ 0< \tau< b,\ j=1, 2\}$ which are parts of $\partial\mathcal{D}^{\mathrm{out}}_{\alpha, b}$.
\end{center}
\begin{remark}
The domain $\mathcal D_{\alpha,b}$ consists of two connected components which are images of $\T_b^j$ ($j=1, 2$).
The above definition for $\overline{\mathcal{D}_{\alpha,b}}$ means that we distinguish $\lambda_{\alpha}(\xi)$ and $\lambda_{\alpha}(-\xi)$ for $\xi\in \mathcal R_{\Re}$ except $\xi=0_j,\pi_j$.
Thus, each connected component of $\overline{\mathcal{D}_{\alpha,b}}$ has a slit in the interior which we extend $\lambda_{\alpha}$ from above to below up to the boundary (we will not extend it through the slit).
\end{remark}
%The domain $\mathcal D_{\alpha,\theta,b}$ consists of two connected components which are the image of $\T_b^\pm$.
%The above definition for $\overline{\mathcal{D}_{\alpha,\theta,b}}$ means that we distinguish $\lambda_{\alpha,\theta}(\xi)$ and $\lambda_{\alpha,\theta}(-\xi)$ for $\xi\in \mathcal R_{\Re}$ except $\xi=0_\pm,\pi_\pm$.
%Thus, each connected component of $\overline{\mathcal{D}_{\alpha,\theta,b}}$ has a slit in the interior which we extend $\lambda_{\alpha,\theta}$ from above and below up to the boundary (we will not extend it through the slit).
\begin{lemma}\label{lem:bihol}
Let $0<b< b_\alpha$.
Then, $\lambda_{\alpha}:\overline{\mathcal R_b}\to \overline{\mathcal D_{\alpha, b}}$ is a homeomorphism.
Moreover, $\lambda_{\alpha}:\mathcal R_b\to \mathcal D_{\alpha,b} $ is a biholomorphism.
\end{lemma}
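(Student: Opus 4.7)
The plan has three stages: (a) show that $\lambda_{\alpha}$ is holomorphic on each open sheet $\T_b^j$ with nonvanishing derivative, (b) prove global injectivity on $\mathcal R_b$ (with the boundary case handled by the two-sided slit convention), and (c) upgrade continuity and bijectivity to a homeomorphism on the closure via compactness. The essential input throughout is the strict inequality $b<b_\alpha$, equivalently $\rho\cosh b<1$.

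For (a), I would first verify that $\rho\cos\xi$ avoids the branch cut $(-\infty,-1]\cup[1,\infty)$ of the principal $\mathrm{Arccos}$ for every $\xi\in\overline{\T_b}$. By \eqref{complexcos}, $\rho\cos\xi$ is real only if $\Im\xi=0$ (giving $\rho\cos u\in[-\rho,\rho]\subset(-1,1)$) or if $\Re\xi\in\pi\Z$ (giving $\pm\rho\cosh v$ with $v\in[0,b]$); in the second case the estimate $\rho\cosh v\leq\rho\cosh b<1$ comes from $b<b_\alpha$. Hence $\lambda_\alpha$ is holomorphic on $\T_b^j$ and continuous on $\overline{\T_b^j}$, and its derivative $\lambda_\alpha'(\xi)=(-1)^{j-1}\rho\sin\xi/\sqrt{1-\rho^2\cos^2\xi}$ is nonvanishing on $\T_b$ since $\sin(u+\im v)=\sin u\cosh v+\im\cos u\sinh v$ cannot vanish for $v>0$.

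For (b), observe that $\Re\,\mathrm{Arccos}(\rho\cos\xi)$ stays strictly in $(0,\pi)$ throughout $\overline{\T_b}$: on the real axis because $\rho\cos u\in(-1,1)$, and off the real axis because $\mathrm{Arccos}$ takes values in $\{0,\pi\}+\im\R$ only at real arguments outside $(-1,1)$. Therefore $\lambda_\alpha(\overline{\T_b^1})\subset\{0<\Re z<\pi\}$ and $\lambda_\alpha(\overline{\T_b^2})\subset\{-\pi<\Re z<0\}$ are disjoint in $\C/2\pi\Z$, ruling out cross-sheet coincidences. Within a single sheet, $\lambda_\alpha(\xi_1)=\lambda_\alpha(\xi_2)$ forces $\cos\xi_1=\cos\xi_2$, hence $\xi_1\equiv\pm\xi_2\pmod{2\pi}$; the minus sign is only possible for $\Im\xi_1=\Im\xi_2=0$, and is precisely what the two-sided slit convention encoded by $\partial\mathcal D_\alpha^{j,\pm}=\{\lambda_\alpha(\xi)\mid\xi\in\T_\Re^j,\ \mp\xi\in[0,\pi]\}$ distinguishes. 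Combined with (a), an injective holomorphic map with nonvanishing derivative is a biholomorphism onto its image, so $\lambda_\alpha:\mathcal R_b\to\mathcal D_{\alpha,b}$ is biholomorphic; since $\overline{\mathcal R_b}$ is compact and $\overline{\mathcal D_{\alpha,b}}$ Hausdorff under the slit topology, the continuous bijection lifts to a homeomorphism on closures.

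The main obstacle I expect is the bookkeeping of the slit structure on $\overline{\mathcal D_{\alpha,b}}$: one must check that the real halves $\{u\in\T_\Re^j:\pm u\in[0,\pi]\}$ are sent by $\lambda_\alpha$ to the two distinct sides $\partial\mathcal D_\alpha^{j,\mp}$ rather than being identified. The correct assignment follows from the sign of $\Im(\rho\cos\xi)=-\rho\sin(\Re\xi)\sinh(\Im\xi)$, which flips between $\Re\xi\in(0,\pi)$ and $\Re\xi\in(-\pi,0)$ and dictates from which side the limit $\Im\xi\to 0^+$ approaches the slit, matching Figures~3--5.
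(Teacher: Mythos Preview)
Your proposal is correct and follows essentially the same route as the paper: both arguments establish that $\rho\cos\xi$ avoids the branch cut of $\mathrm{Arccos}$ (using $b<b_\alpha$), separate the two sheets via $\Re\,\mathrm{Arccos}(\rho\cos\xi)\in(0,\pi)$, reduce injectivity on a single sheet to injectivity of $\cos$ modulo the slit convention, and compute the derivative \eqref{dxil} to conclude biholomorphy. The one packaging difference is in the homeomorphism step: the paper splits $\overline{\T_b^j}$ into the half-strips $\{0\le\Re\xi\le\pi\}$ and $\{-\pi\le\Re\xi\le 0\}$ and checks continuity of $\lambda_\alpha$ and its inverse piecewise, whereas you invoke the general fact that a continuous bijection from a compact space to a Hausdorff space is a homeomorphism. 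Your shortcut is cleaner but still requires the same bookkeeping you flag in the last paragraph---namely, that $\lambda_\alpha$ is continuous \emph{into the slit topology}, which amounts to the sign analysis of $\Im(\rho\cos\xi)$ sending each half of $\T_\Re^j$ to the correct side of the slit---so the underlying content is the same.
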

\begin{proof}
By the definition of $\overline{\mathcal{D}_{\alpha, b}}$, it is obvious that $\left.\lambda_{\alpha}\right|_{\overline{\mathcal R_b}}$ is a surjection. Next, we show the injectivity. The injectivity of $\left.\lambda_{\alpha}\right|_{\overline{\mathbb{T}_b^j}}$ follows from the injectivity of $\cos(\cdot)$ in $\T_b^j$, $\mathrm{Arccos}(\cdot)$ in $\{z\in\C\ |\ \Re z\in (-1,1)\}$ and the definition of $\overline{\mathcal{D}_{\alpha,b}}$, where we have insert cuts to have the injectivity. If $\lambda_{\alpha}(\overline{\T_{b}^1})\cap \lambda_{\alpha}(\overline{\T_{b}^{2}})=\emptyset$ can be checked, the injectivity of $\left.\lambda_{\alpha}\right|_{\overline{\mathcal{R}}_{b}}$ follows. 
To see this, we show that for any $\xi\in\overline{\T_{b}}$, we have $0<\mathrm{Re}(\mathrm{Arccos}(\rho\cos\xi))<\pi$. First, $\mathrm{Re}(\mathrm{Arccos}(\rho\cos\xi))\in(0, \pi)$ for any $\xi\in\T_{\mathrm{Re}}$. Thus, if $\mathrm{Re}(\mathrm{Arccos}(\rho\cos\xi))=0$ or $\pi$, $\xi$ must be in $\T_{b}$. On the other hand, $\mathrm{Re}(\mathrm{Arccos}(z))=0$ or $\pi$ implies $z=\cos(\im v)$ or $z=\cos(\pi+\im v)$ for some $v\in\R$. From \eqref{complexcos}, we have $z\in\R$ and $|z|\ge 1$. However, these two conclutions are not compartible. Hence we have $0<\mathrm{Re}(\mathrm{Arccos}(\rho\cos\xi))<\pi$ for any $\xi\in\overline{\mathbb{T}_{b}}$.

Next, we show that $\lambda_{\alpha}:\mathcal R_b\to \mathcal D_{\alpha,b} $ is a biholomorphism. To see this,  it suffices to show that $\lambda_{\alpha}$ is analytic and its derivative does not vanish.
The analyiticity is  obvious and by direct computation, we have
\begin{align}\label{dxil}
\partial_\xi \lambda_{\alpha}(\xi)=(-1)^{j-1}\frac{\rho \sin \xi}{\sqrt{1-\rho^2\cos^2\xi}}\neq 0,\quad \xi\in \T_{b}^j,\quad j=1, 2.
\end{align}
Thus, $\lambda_{\alpha}$ is a biholomorphism in $\mathcal D_{\alpha,b}$.

Finally, we show that $\lambda_{\alpha}:\overline{\mathcal R_b}\to \overline{\mathcal D_{\alpha,b}}$ is a homeomorphism.
By the property $\lambda_{\alpha}(\overline{\T_{b}^{1}})\cap \lambda_{\alpha}(\overline{\T_{b}^{2}})=\emptyset$, it suffices to consider $\lambda_{\alpha}$ on restricted domains $\overline{\T_b^j}$. Here, we only consider $\overline{\T_{b}^{1}}$. The other case is similar to this. We divide $\overline{\T_{b}^{1}}$ into $\overline{\T_b^1}=\{z\in \overline{\T_b^1}\ |\ 0\leq \Re z \leq \pi\}\cup \{z\in \overline{\T_b^1}\ |\ -\pi\leq \Re z \leq 0\}$. Then we see that the function $\rho \cos \xi$ restricted on each region are continuous up to the boundaries where their images are in $\{z\in \C\ |\ \Im z\leq 0,\ -1<\Re z<1\}$ and $\{z\in \C\ |\ \Im z\geq 0,\ -1<\Re z<1\}$, respectively.
Moreover, the inverse are also continuous up to the boundaries.
Next, by considering $\cos \lambda$ defined on $\{z\in\C/2\pi\Z\ |\ 0<\Re z<\pi,\ \Im z\geq 0\}$ and 
$\{z\in\C/2\pi\Z\ |\ 0<\Re z<\pi,\ \Im z\leq 0\}$, we see that it is also continuous up to the boundaries and its inverse is also continuous.
Therefore, we have the conclusion. 
\end{proof}

We denote the inverse of $\lambda_{\alpha}$ as $\xi_{\alpha}$.
That is,
\begin{align*}
\xi_{\alpha}:\overline{\mathcal D_{\alpha,b}}  \to \overline{\mathcal R_{b}},\quad \xi_{\alpha}(\lambda):=\lambda_{\alpha}^{-1}(\lambda),
\end{align*}
for $0<b<b_{\alpha}$. We also define
\begin{align*}
\xi_{\infty}:\overline{\mathcal{D}_{\infty, b}}\to \overline{\mathcal{R}_{b}},\quad \xi_{\infty}(\lambda):=\lambda_{\infty}^{-1}(\lambda),
\end{align*} 
for $0<b<b_{\alpha_{\pm}}$. Recall that the definition of $\lambda_{\alpha}$ and $\lambda_{\infty}$ appeared in \eqref{def:lambda} and \eqref{def:lambdainf}, respectively. We notice that  $e^{\pm\im \xi_{\alpha(x)}(\lambda)}$ are eigenvalues of $T_\lambda(x)$.

In the following, we would like to introduce the function $\xi_{\alpha(x)}(\lambda_\infty(\xi))$.
The domain of $\xi_{\alpha(x)}$ depends on $x$ in general. Thus, we restrict domains of themselves to a smaller region.

We set $b_0:=\frac18 b_{\pm}$.
Then, there exists $\delta_0>0$ s.t.\ if $|\alpha-\alpha_\pm|<\delta_0$, then $2b_0<\frac12 b_\alpha<8b_0$,
$$\mathcal D_{\infty, 2b_0}\setminus \partial \mathcal D_{\alpha} \subset \mathcal D_{\alpha,\frac12b_\alpha},$$
and 
\begin{align}\label{inclution}
(\mathrm{Arccos}\rho,\mathrm{Arccos}(-\rho))\subset (\mathrm{Arccos}\(\rho_\infty \cosh b_0 \), \mathrm{Arccos}\(-\rho_\infty \cosh b_0\)).
\end{align}
\begin{remark}
We consider $\lambda_{\infty}((1, \im \tau))$ with $0<\tau<b_{0}$. By \eqref{def:lambdainf} and \eqref{complexcos}, we have $\lambda_{\infty}((1, i\tau))=\mathrm{Arccos}(\rho_{\infty}\cosh \tau)$ and its image corresponds to the left hand side of $\partial\mathcal{D}_{\infty}^{1, \pm}$ (see Figure 6). Similarily, we consider $\lambda_{\infty}((1, \pi+\im \tau))$ with $0<\tau<b_{0}$. Then, we have $\lambda_{\infty}((1, \pi+i\tau))=\mathrm{Arccos}(-\rho_{\infty}\cosh \tau)$ and its image corresponds to the right hand side of $\partial\mathcal{D}_{\infty}^{1, \pm}$. One can also consider $\lambda_{\infty}((2, \cdot))$ similarly. Inserting cuts $\{\xi\in\mathcal{R}_{2b_{0}}|\ 0\le \mathrm{Im}\ \xi\le b_{0},\ \mathrm{Re}\ \xi=0, \pi\}$ guarantees the relation \eqref{inclution} for all sufficiently large $x$.
\begin{center}
\input{figure6.tex}
\end{center}
\begin{center}
Figure 6: Images of $\{(1, \im \tau)|\ 0<\tau<b_{0} \}$ and $\{(1, \pi+\im \tau)|\ 0<\tau<b_{0}\}$ by $\lambda_{\infty}$. 
\end{center}
\end{remark}
Under above assertions, we set
\begin{align*}
\widetilde {\mathcal R_{0}}&:=\{\xi \in \mathcal R_{2b_0},\ |\ \Im \xi>b_0,\ \text{if}\ \Re \xi=0,\pi\}\cup(\mathcal{R}_{\mathrm{Re}}\setminus\mathcal{R}_{\mathrm{E}}).
\end{align*}

Then, we have
\begin{align*}
\lambda_\infty(\widetilde {\mathcal R_{0}}\setminus\mathcal{R}_{\mathrm{Re}}) \subset \mathcal D_{\alpha,\frac12 b_{\alpha}}\ \text{for}\ |\alpha-\alpha_\pm|<\delta_{0}.
\end{align*}

For the same $\delta_{0}$ introduced in the above of \eqref{inclution}, we can take sufficiently large $r_{0}>0$ such that if $\pm x\ge r_{0}$, then $|\alpha(\pm x)-\alpha_{\pm}|<\delta_{0}$. We define $\zeta(x,\xi):\Z\times \widetilde {\mathcal R_0}\to \mathcal R_{8b_0}$ by
\begin{align}\label{pointfreq}
\zeta(x,\xi):=\begin{cases}\xi_{\alpha(x)}(\lambda_{\infty}(\xi)) &  |x| \geq r_0 ,\\ \xi & |x|<r_0.\end{cases}
\end{align}

From \eqref{pointfreq}, it immediately follows that the function $\zeta(x,\xi)$ converges to $\xi$ as $|x|\to \infty$.
\begin{lemma}\label{lem:limzeta}
Let $\Omega\subset \widetilde{\mathcal{R}_{0}}$ be a compact set of $\overline{\mathcal{R}_{8b_{0}}}$. Then, $\zeta(x,\xi)\to \xi$ as $|x|\to \infty$ uniformly in $\Omega$.
\end{lemma}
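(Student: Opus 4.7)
The plan is to reduce the assertion to the uniform convergence of the family of inverses $\xi_{\alpha(x)}$ to $\xi_{\infty}$ on a fixed compact set, and then exploit joint continuity in the parameter $\alpha$. For $|x|\ge r_{0}$, since $\xi_{\infty}$ inverts $\lambda_{\infty}$ on $\overline{\mathcal R_{2b_{0}}}\supset \widetilde{\mathcal R_{0}}$ (Lemma~\ref{lem:bihol}), I can write $\xi=\xi_{\infty}(\lambda_{\infty}(\xi))$ and obtain
\begin{align*}
\zeta(x,\xi)-\xi=\xi_{\alpha(x)}(\lambda_{\infty}(\xi))-\xi_{\infty}(\lambda_{\infty}(\xi)).
\end{align*}
Setting $K:=\lambda_{\infty}(\Omega)$, which is compact by the continuity part of Lemma~\ref{lem:bihol}, the statement reduces to
\begin{align*}
\sup_{\lambda\in K}\bigl|\xi_{\alpha(x)}(\lambda)-\xi_{\infty}(\lambda)\bigr|\to 0\quad\text{as }|x|\to\infty.
\end{align*}

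The first step is to check that $\xi_{\alpha(x)}$ is well-defined and continuous on $K$ for all sufficiently large $|x|$. By the choice of $r_{0}$, for $\pm x\ge r_{0}$ one has $|\alpha(\pm x)-\alpha_{\pm}|<\delta_{0}$, so by the inclusion $\lambda_{\infty}(\widetilde{\mathcal R_{0}}\setminus \mathcal R_{\Re})\subset \mathcal D_{\alpha,\tfrac12 b_{\alpha}}$ stated just before \eqref{pointfreq}, together with the fact that $\partial\mathcal D_{\infty}$ intersected with the image of $\mathcal R_{\Re}\setminus \mathcal R_{E}$ lies in $\overline{\mathcal D_{\alpha(x),\tfrac12 b_{\alpha(x)}}}$ (a consequence of the inclusion \eqref{inclution} and the cuts at $\Re\xi=0,\pi$ built into $\widetilde{\mathcal R_{0}}$), we obtain $K\subset \overline{\mathcal D_{\alpha(x),\tfrac12 b_{\alpha(x)}}}$. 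Hence $\xi_{\alpha(x)}$ is defined and continuous on $K$ for all $|x|\ge r_{0}$.

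The second step is to establish joint continuity of $(\alpha,\lambda)\mapsto \xi_{\alpha}(\lambda)$ on $\{\alpha\in\C:|\alpha-\alpha_{\pm}|<\delta_{0}\}\times K$. On each sheet $j=1,2$, the explicit formula $\lambda_{\alpha}(\xi)=(-1)^{j-1}\mathrm{Arccos}(\rho\cos\xi)$ with $\rho=\sqrt{1-|\alpha|^{2}}$ is analytic in $(\xi,\alpha)$ and, by \eqref{dxil}, satisfies $\partial_{\xi}\lambda_{\alpha}\neq 0$ on $\mathcal R_{\tfrac12 b_{\alpha}}^{j}$. The implicit/inverse function theorem therefore yields joint continuity of $\xi_{\alpha}$ in $(\alpha,\lambda)$ on the open part, which extends to the boundary arcs $\partial\mathcal D_{\alpha}^{j,\pm}$ by the homeomorphism part of Lemma~\ref{lem:bihol}. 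Combining this with $\xi_{\alpha_{\pm}}=\xi_{\infty}$, uniform continuity on the compact set $\{\alpha_{\pm}\}\times K$ provides, for every $\varepsilon>0$, an $\eta>0$ such that $|\alpha-\alpha_{\pm}|<\eta$ implies $|\xi_{\alpha}(\lambda)-\xi_{\infty}(\lambda)|<\varepsilon$ uniformly in $\lambda\in K$. Since $\alpha(\pm x)\to \alpha_{\pm}$ as $x\to\pm\infty$, the desired uniform convergence follows.

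The main obstacle is the first step: one must confirm that the compact set $K=\lambda_{\infty}(\Omega)$, which touches $\partial\mathcal D_{\infty}$ whenever $\Omega$ meets $\mathcal R_{\Re}\setminus \mathcal R_{E}$, really lies in $\overline{\mathcal D_{\alpha(x),\tfrac12 b_{\alpha(x)}}}$ for all large $|x|$, even though $\partial\mathcal D_{\infty}$ and $\partial\mathcal D_{\alpha(x)}$ do not coincide when $|\alpha(x)|\neq |\alpha_{\pm}|$. This is precisely the reason the slits $\{\Re\xi=0,\pi,\ 0\le\Im\xi\le b_{0}\}$ were excised in defining $\widetilde{\mathcal R_{0}}$ and the thresholds $b_{0}$ and $\delta_{0}$ were tuned so that \eqref{inclution} holds; once this geometric point is secured, the remainder of the argument is a routine uniform-continuity argument.
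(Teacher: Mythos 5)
Your overall strategy mirrors the paper's: write $\zeta(x,\xi)-\xi=\xi_{\alpha(x)}(\lambda_\infty(\xi))-\xi_\infty(\lambda_\infty(\xi))$, reduce to joint continuity of $(\alpha,\lambda)\mapsto\xi_\alpha(\lambda)$ on a compact parameter set, upgrade to uniform continuity, and conclude from $\alpha(\pm x)\to\alpha_\pm$. The difference is in how joint continuity is established. The paper proves it by a compactness/subsequence argument: it takes $(\alpha_n,\lambda_n)\to(\alpha_0,\lambda_0)$, sets $\xi_n=\xi_{\alpha_n}(\lambda_n)$, shows $\lambda_{\alpha_0}(\xi_n)\to\lambda_0$ using uniform continuity of $\lambda_\alpha$ in $\alpha$, and rules out $\xi_n\not\to\xi_0$ by extracting a convergent subsequence and invoking injectivity of $\lambda_{\alpha_0}$ from Lemma~\ref{lem:bihol}; this treats interior and boundary points of $\overline{\mathcal D_{\alpha,b}}$ on the same footing. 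You instead appeal to the inverse function theorem via $\partial_\xi\lambda_\alpha\neq 0$ (equation \eqref{dxil}), which gives joint analyticity in the interior, and then hand over to ``the homeomorphism part of Lemma~\ref{lem:bihol}'' for the boundary arcs. That last step is the weak link: Lemma~\ref{lem:bihol} provides continuity of $\xi_\alpha$ in $\lambda$ for each \emph{fixed} $\alpha$, not joint continuity at boundary points, and an additional argument (e.g., equicontinuity in $\alpha$ coming from the boundedness of $\partial_X\xi_\alpha$ away from $\mathcal R_E$, essentially what Proposition~\ref{prop:zetaxi} later makes quantitative) is needed to close the gap. Your proof is correct in spirit and would be fine with that supplement made explicit; the paper's sequential argument simply avoids having to separate interior from boundary.
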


\begin{proof}
Let $\Omega\subset \widetilde{\mathcal R_0}$ be a compact set of $\overline{\mathcal R_{8b_0}}$.
We claim $\xi_{\alpha}(\lambda_\infty(\xi))$ is continuous w.r.t.\ $\alpha$ and $\xi$ in $\{\alpha\in \C\ |\ |\alpha-\alpha_\pm|\leq \delta_0\}\times \Omega$.
Since this domain is compact, it is also uniformly continuous. 
Thus, for any $\varepsilon>0$, there exists $\delta>0$ s.t.\ if $|\alpha-\alpha'|<\delta$, then $\sup_{\xi \in \Omega}|\xi_{\alpha}(\lambda_\infty(\xi))-\xi_{\alpha'}(\lambda_\infty(\xi))|<\varepsilon$.
Combining this fact with $\xi_{\alpha(x)}(\lambda_\infty(\xi))\to \xi_{\infty}(\lambda_\infty(\xi))=\xi$ which also comes from the continuity of $\xi_{\alpha}$ we have the conclusion.

It remains to show the continuity of $\xi_{\alpha}(\lambda_\infty(\xi))$.
Since $\lambda_\infty$ is continuous, it suffices to show the continuity of $\xi_{\alpha}(\lambda)$ w.r.t.\ variables $\alpha$ and $\lambda$.
Let $(\alpha_n,\lambda_n)\to(\alpha_0,\lambda_0)$.
We set $\xi_n=\xi_{\alpha_n}(\lambda_n)$ and $\xi_0=\xi_{\alpha_0}(\lambda_0)$.
Then, we have $\lambda_{\alpha_n}(\xi_n)=\lambda_n\to \lambda_0$ because $\lambda_{\alpha}$ is the inverse function of $\xi_{\alpha}$.
Next, from the uniform continuity, we have $\lambda_{\alpha_n}(\xi_n)-\lambda_{\alpha_0}(\xi_n)\to 0$.
This implies $\lambda_{\alpha_0}(\xi_n)-\lambda_0=\lambda_{\alpha_0}(\xi_n)-\lambda_{\alpha_n}(\xi_n) +\lambda_{\alpha_n}(\xi_n)-\lambda_0 \to 0$.
Finally, we assume $\xi_n\not \to \xi_0$. By taking subsequence if necessary, we have $\xi_n\to \xi_1\neq \xi_0$ by the compactness of the domain.
Since $\lambda_{\alpha_0}$ is continuous we have $\lambda_{\alpha_0}(\xi_n)\to \lambda_{\alpha_0}(\xi_1)$ which is different from $\lambda_0$ because of the injectivity of $\lambda_{\alpha_0}$ proved in Lemma \ref{lem:bihol}.
Therefore, we have the conclusion.
\end{proof}
\begin{center}
\input{figure7.tex}
\end{center}
\begin{center}
Figure 7: $\widetilde{\mathcal{R}_{0}}$ consists of two sheets. For each sheets $\T_{2b_{0}}^1$ and $\T_{2b_{0}}^2$, $(-\pi, 0)$ and $(0, \pi)$ are attached and thick lines are excluded. A marks $\circ$ corresponds to an element of $\mathcal{R}_{\mathrm{E}}$.
\end{center}

The function $\zeta$ has the following symmetry.
\begin{lemma}\label{lem:symzeta}
For $\xi \in \widetilde{\mathcal R_0}$, we have $\zeta(x,-\overline{\xi})=-\overline{\zeta(x,\xi)}$. In particular, $\zeta(x, -\xi)=-\zeta(x, \xi)$ for any $\xi\in \mathcal{R}_{\mathrm{Re}}\setminus\mathcal{R}_{\mathrm{E}}$.
\end{lemma}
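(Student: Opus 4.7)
The identity is a Schwarz-type reflection, and I will reduce it to two elementary facts: (i) $\cos(-\overline\xi)=\overline{\cos\xi}$, since $\cos$ has real Taylor coefficients; and (ii) the principal branch $\mathrm{Arccos}$, defined on $\mathbb{C}\setminus((-\infty,-1]\cup[1,\infty))$, satisfies the reflection $\mathrm{Arccos}(\overline z)=\overline{\mathrm{Arccos}(z)}$, as it is real-analytic on $(-1,1)$. I will also use that the involution $\xi\mapsto -\overline\xi$ preserves the sheet label $j\in\{1,2\}$ on $\mathcal R_{2b_0}$: it fixes $\Im\xi$ and sends $\Re\xi\in\{0,\pi\}$ to itself modulo $2\pi$, hence $\widetilde{\mathcal R_0}$ is invariant under it, as are the slits used to define $\overline{\mathcal D_{\alpha,b}}$.

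Given (i) and (ii), and the fact that $\rho$ is real, I would compute
\begin{align*}
\lambda_\alpha(-\overline\xi)=(-1)^{j-1}\mathrm{Arccos}(\rho\cos(-\overline\xi))=(-1)^{j-1}\overline{\mathrm{Arccos}(\rho\cos\xi)}=\overline{\lambda_\alpha(\xi)},
\end{align*}
and likewise $\lambda_\infty(-\overline\xi)=\overline{\lambda_\infty(\xi)}$. By the biholomorphism of Lemma \ref{lem:bihol} and the symmetry of the domain $\overline{\mathcal D_{\alpha,b}}$ under complex conjugation, inverting gives $\xi_\alpha(\overline\mu)=-\overline{\xi_\alpha(\mu)}$.

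Substituting into the definition \eqref{pointfreq}, for $|x|\geq r_0$,
\begin{align*}
\zeta(x,-\overline\xi)=\xi_{\alpha(x)}(\lambda_\infty(-\overline\xi))=\xi_{\alpha(x)}(\overline{\lambda_\infty(\xi)})=-\overline{\xi_{\alpha(x)}(\lambda_\infty(\xi))}=-\overline{\zeta(x,\xi)},
\end{align*}
while for $|x|<r_0$ the identity $\zeta(x,-\overline\xi)=-\overline\xi=-\overline{\zeta(x,\xi)}$ is immediate. This proves the first assertion. For the ``in particular'' part, $\xi\in \mathcal R_{\Re}\setminus \mathcal R_E$ means $\overline\xi=\xi$, and it suffices to verify $\zeta(x,\xi)\in\R$ so that $-\overline{\zeta(x,\xi)}=-\zeta(x,\xi)$. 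The case $|x|<r_0$ is trivial; for $|x|\geq r_0$, $\cos\xi\in(-1,1)$ is real, so $\lambda_\infty(\xi)\in\R$, and the inclusion \eqref{inclution} places $\lambda_\infty(\xi)$ in the real image of $\lambda_{\alpha(x)}$, so $\xi_{\alpha(x)}(\lambda_\infty(\xi))\in\R$.

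The only genuinely delicate point in this plan is the bookkeeping around the slits and sheets: one must check that the involution $\xi\mapsto-\overline\xi$ is compatible with the cuts used to extend $\lambda_\alpha$ up to $\partial\mathcal D_{\alpha,b}$, so that the Schwarz-type identity survives passing to the inverse $\xi_\alpha$. Everything else is direct computation.
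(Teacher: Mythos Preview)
Your proof of the main identity $\zeta(x,-\overline\xi)=-\overline{\zeta(x,\xi)}$ is correct and follows exactly the paper's route: both derive the Schwarz reflections $\lambda_\alpha(-\overline\xi)=\overline{\lambda_\alpha(\xi)}$ and $\xi_\alpha(\overline\mu)=-\overline{\xi_\alpha(\mu)}$ from the real-analyticity of $\cos$ and $\mathrm{Arccos}$, and then chain them through the definition \eqref{pointfreq}.

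There is, however, a slip in your argument for the ``in particular'' clause. Your appeal to \eqref{inclution} is misplaced: that inclusion reads
\[
(\mathrm{Arccos}\rho(x),\mathrm{Arccos}(-\rho(x)))\subset (\mathrm{Arccos}(\rho_\infty\cosh b_0),\mathrm{Arccos}(-\rho_\infty\cosh b_0)),
\]
which bounds the real image of $\lambda_{\alpha(x)}$ from \emph{outside}, not from inside. It does not guarantee that $\lambda_\infty(\xi)\in(\mathrm{Arccos}\rho_\infty,\mathrm{Arccos}(-\rho_\infty))$ lies in the real image of $\lambda_{\alpha(x)}$; if $\rho(x)<\rho_\infty$ the inclusion goes the wrong way near the endpoints. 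The paper does not invoke \eqref{inclution} here. Instead it argues directly: for $\xi\in\mathcal R_\Re\setminus\mathcal R_E$ one has $\lambda_\infty(\xi)\in\R$ and $\lambda_\infty(-\xi)=\overline{\lambda_\infty(\xi)}$ (as boundary points of the slit domain), and then applies the already-established identity $\xi_{\alpha(x)}(\overline\mu)=-\overline{\xi_{\alpha(x)}(\mu)}$ to conclude. The reality of $\zeta(x,\xi)$ that you wanted is used implicitly there as well, but the paper defers the careful verification of $\zeta(x,\xi)\in\mathcal R_\Re$ to Lemma~\ref{lem:zetareodd}, where it is established for $|x|\ge r_\epsilon$ and $\xi\in\widetilde{\mathcal R_\epsilon}$.
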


\begin{proof}
If $|x|<r_0$, then the statement is trivial. We only consider the case where $\xi \in \T^1_{2b_0}$ and $|x|\ge r_{0}$. The other case can be proven similarly. First, we observe that
\begin{align}\label{sym:lam}
\lambda_\infty (-\overline{\xi})=\mathrm{Arccos}(\rho_{\infty} \cos(-\overline{\xi}))=\overline{\mathrm{Arccos}(\rho_{\infty} \cos(-\xi))}=\overline{\lambda_\infty(\xi)}.
\end{align}
Since $\xi_{\alpha}$ is an inverse of $\lambda_{\alpha}$, we have
$
\lambda_{\alpha}(\xi_{\alpha}(\overline{\lambda}))=\overline{\lambda}.
$
On the other hand, it is seen that
\begin{align*}
\lambda_{\alpha}(-\overline{\xi_{\alpha}(\lambda)})=\mathrm{Arccos}\(\rho\cos(-\overline{\xi_{\alpha}(\lambda)})\) =\overline{\mathrm{Arccos}\(\rho\cos(-\xi_{\alpha}(\lambda))\)}=\overline{\lambda}.
\end{align*}
This equality implies
\begin{align}\label{sym:xi}
\xi_{\alpha}(\overline{\lambda})=-\overline{\xi_{\alpha}(\lambda)}.
\end{align}
Combining \eqref{sym:lam} and \eqref{sym:xi}, we have
$$\zeta(x, -\overline{\xi})=\xi_{\alpha(x)}(\lambda_{\infty}(-\overline{\xi}))=\xi_{\alpha(x)}(\overline{\lambda_{\infty}(\xi)})=-\overline{\xi_{\alpha(x)}(\lambda_{\infty}(\xi))}=-\overline{\zeta(x, \xi)}.$$ 
For any $\xi\in\mathcal{R}_{\mathrm{Re}}\setminus\mathcal{R}_{\mathrm{E}}$, we have $\lambda_{\infty}(\xi)\in\R$ and $\lambda_{\infty}(-\xi)=\overline{\lambda_{\infty}(\xi)}$. Thus, we have
$$\zeta(x, -\xi)=\xi_{\alpha(x)}(\lambda_{\infty}(-\xi))=\xi_{\alpha(x)}(\overline{\lambda_{\infty}(\xi)})=-\xi_{\alpha(x)}(\lambda_{\infty}(\xi))=-\zeta(x, \xi).$$
\end{proof}

For $\epsilon>0$, we define
\begin{align}\label{def:Re}
\widetilde{\mathcal R_{\epsilon}}:=\{\xi \in \widetilde {\mathcal R_0}\ |\ |\sin \xi| \geq \epsilon\}.
\end{align}

In the following, we will encounter several (large) numbers depending on $\epsilon$ all denoted by $r_\epsilon$.
Since it suffices to consider the maximum of such $r_\epsilon$'s, we will not distinguish them.

\begin{lemma}\label{lem:zetareodd}
For $\epsilon>0$, there exists $r_\epsilon>0$ s.t.\ if $|x|\geq r_\epsilon$ and $\xi \in \widetilde{\mathcal R_{\epsilon}} \cap \mathcal R_{\Re}$, then $\zeta(x,\xi) \in \mathcal R_{\Re}$ and $\zeta(x,-\xi)=-\zeta(x,\xi)$.

\end{lemma}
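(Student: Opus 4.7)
The plan is to reduce to the dispersion relation and exploit continuity of $\rho(x)$ at infinity. Since $\zeta(x,\xi)=\xi_{\alpha(x)}(\lambda_\infty(\xi))$ for $|x|\ge r_0$, the identity $\lambda_{\alpha(x)}(\zeta(x,\xi))=\lambda_\infty(\xi)$ rephrased through \eqref{disprel} becomes
\begin{align*}
\rho(x)\cos\zeta(x,\xi)=\cos\lambda_\infty(\xi).
\end{align*}
So the whole question reduces to controlling the right-hand side and comparing it with $\rho(x)$.

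First I would observe that for $\xi\in\widetilde{\mathcal R_\epsilon}\cap\mathcal R_{\Re}$ we have $\xi\in\R$ (modulo sheet) with $|\sin\xi|\ge\epsilon$, hence $|\cos\xi|\le\sqrt{1-\epsilon^2}$. Because $\xi$ lies on the real part of either sheet, $\lambda_\infty(\xi)=\pm\mathrm{Arccos}(\rho_\infty\cos\xi)\in\R$, and
\begin{align*}
|\cos\lambda_\infty(\xi)|=\rho_\infty|\cos\xi|\le\rho_\infty\sqrt{1-\epsilon^2}<\rho_\infty.
\end{align*}

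Next, since $|\alpha(x)|\to|\alpha_\pm|$ as $|x|\to\infty$, we have $\rho(x)\to\rho_\infty$. Using $\rho_\infty>\rho_\infty\sqrt{1-\epsilon^2}$, I choose $r_\epsilon\ge r_0$ large enough so that $\rho(x)>\rho_\infty\sqrt{1-\epsilon^2}$ whenever $|x|\ge r_\epsilon$. Combining with the previous display yields
\begin{align*}
\left|\frac{\cos\lambda_\infty(\xi)}{\rho(x)}\right|<1\quad\text{for all }|x|\ge r_\epsilon,\ \xi\in\widetilde{\mathcal R_\epsilon}\cap\mathcal R_{\Re}.
\end{align*}
Hence $\cos\zeta(x,\xi)=\rho(x)^{-1}\cos\lambda_\infty(\xi)$ is real and strictly in $(-1,1)$, which forces $\zeta(x,\xi)\in\mathcal R_{\Re}\setminus\mathcal R_E$.

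Finally, the antisymmetry is immediate: $-\xi$ also lies in $\widetilde{\mathcal R_\epsilon}\cap\mathcal R_{\Re}$ (since $|\sin(-\xi)|=|\sin\xi|$), so the same argument gives $\zeta(x,-\xi)\in\mathcal R_{\Re}\setminus\mathcal R_E$, and Lemma \ref{lem:symzeta} then yields $\zeta(x,-\xi)=-\zeta(x,\xi)$. No serious obstacle is expected; the only care needed is the uniform choice of $r_\epsilon$, which is handled by picking a single threshold making $\rho(x)$ close enough to $\rho_\infty$ on both tails.
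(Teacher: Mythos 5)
Your proof is correct and takes essentially the same route as the paper: both hinge on showing that $\cos\lambda_\infty(\xi)$ has modulus strictly below $\rho_\infty$ (uniformly on $\widetilde{\mathcal R_\epsilon}\cap\mathcal R_{\Re}$), so that once $\rho(x)$ is close enough to $\rho_\infty$ the value $\lambda_\infty(\xi)$ still lies inside the open spectral band for $\alpha(x)$, which is exactly the condition for $\xi_{\alpha(x)}(\lambda_\infty(\xi))$ to be real. The paper phrases this geometrically via the boundary sets $\partial\mathcal D_\infty$ and $\partial\mathcal D_{\alpha(x)}$, while you phrase it computationally via the dispersion relation $\rho(x)\cos\zeta(x,\xi)=\cos\lambda_\infty(\xi)$ and the formula \eqref{complexcos} to rule out $\Im\zeta>0$; this is a cleaner, more explicit variant, but the content is identical. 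The only small point worth double-checking in your write-up is that the step from ``$\cos\zeta$ real and in $(-1,1)$'' to ``$\zeta\in\mathcal R_{\Re}\setminus\mathcal R_E$'' uses that $\zeta$ already lies in $\overline{\mathcal R_b}$ with $\Im\zeta\ge 0$, and that \eqref{complexcos} excludes both the cases $\Re\zeta\in\{0,\pi\}$ with $\Im\zeta>0$ (which would give $|\cos\zeta|\ge 1$) and $\zeta\in\mathcal R_E$ (also $|\cos\zeta|=1$); you clearly have this in mind, and the conclusion via Lemma \ref{lem:symzeta} then matches the paper.
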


\begin{proof}
First, there exists $\delta=\delta_\epsilon>0$ s.t.\ $\lambda_\infty(\widetilde{\mathcal R_{\epsilon}} \cap \mathcal R_{\Re})\subset \partial \mathcal{D}_{\infty}$ and $$\lambda_{\infty}(\xi) \in (-\mathrm{Arccos}(-\rho_{\infty})+\delta,\ -\mathrm{Arccos}(\rho_{\infty})-\delta)\cup (\mathrm{Arccos}(\rho_{\infty})+\delta,\ \mathrm{Arccos}(-\rho_{\infty})-\delta),\quad \xi\in \widetilde{\mathcal{R}_{\epsilon}}\cap\mathcal{R}_{\mathrm{Re}}.$$
By taking $r_\epsilon>0$ sufficiently large, we have
\begin{align*}
\lambda_{\infty}(\xi) \in 
 (-\mathrm{Arccos}(-\rho(x)),-\mathrm{Arccos}(\rho(x)))\cup (\mathrm{Arccos}(\rho(x)),\mathrm{Arccos}(-\rho(x))),\quad \xi\in\widetilde{\mathcal{R}_{\epsilon}}\cap\mathcal{R}_{\mathrm{Re}}
\end{align*}
for all $|x|\geq r_\epsilon$.
Therefore, we have $\lambda_\infty(\xi) \in \partial \mathcal{D}_{\alpha(x)}$ and it implies $\xi_{\alpha(x)}(\lambda_\infty(\xi))\in \mathcal R_{\Re}$. The later assertion is a direct consequence of the first assertion and Lemma \ref{lem:symzeta}.
\end{proof}

In the following, we will need a bound for $|\zeta(x+1,\xi)-\zeta(x,\xi)|$.
By the definition of $\zeta$, it suffices to study the derivative of $\xi_{\alpha}$ w.r.t.\ $\alpha$.
Notice that $\xi_{\alpha}(\lambda)$ is analytic w.r.t.\ $\Re\alpha$ and $\Im \alpha$.
By the identity
\begin{align*}
\lambda_{\alpha}\(\xi_{\alpha}(\lambda)\)=\lambda,
\end{align*}
partial derivatives w.r.t.\ $X=\Re\alpha,\Im \alpha$ are
\begin{align*}
(\partial_{X}\lambda_{\alpha})(\xi_{\alpha}(\lambda))+(\partial_{\xi}\lambda_{\alpha})(\xi_{\alpha}(\lambda))\cdot\partial_{X}\xi_{\alpha}(\lambda)=0,\quad X=\Re\alpha,\Im \alpha.
\end{align*}
Thus we have
\begin{align}\label{diff:xi}
&\partial_{X} \xi_{\alpha}(\lambda)=-\frac{\partial_{X}\lambda_{\alpha}\(\xi_{\alpha}(\lambda)\)}{\partial_\xi \lambda_{\alpha}(\xi_{\alpha}(\lambda))},\quad
X=\Re \alpha, \Im \alpha.
\end{align}

In the following, we will derive several estimates. For positive quantities $A$ and $B$, we write $A\lesssim B$ if there exists a positive constant $M>0$ such that $A\le MB$. If such $M$ depends on a specific parameter $\eta$, we denote like as $A\lesssim_{\eta} B$.

By \eqref{diff:xi}, we have a quantitative version of Lemma \ref{lem:limzeta}.
\begin{proposition}\label{prop:zetaxi}
Let $\epsilon>0$.
Then, there exists $r_\epsilon>0$ s.t.\ for $\xi \in \widetilde{\mathcal R_{\epsilon}}$ and $\pm x\geq r_\epsilon$, we have
\begin{align}\label{eq:diffzetaxi}
|\zeta(\pm x,\xi)-\xi|=|\xi_{\alpha(\pm x)}(\lambda_{\infty}(\xi))-\xi_{\infty}(\lambda_{\infty}(\xi))|\lesssim \epsilon^{-1}|\alpha(\pm x)-\alpha_\pm|.
\end{align}
In particular, for $\xi\in\widetilde{\mathcal{R}_{\epsilon}}$ and $|x|\ge r_{\epsilon}$, we have $\zeta(x,\xi) \in \widetilde{\mathcal R_{\epsilon/2}}$.
\end{proposition}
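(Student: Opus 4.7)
The plan is to write $\xi_{\alpha(\pm x)}(\lambda_\infty(\xi))-\xi_{\alpha_\pm}(\lambda_\infty(\xi))$ as a line integral in the coin parameter along the straight path $\alpha_t:=\alpha_\pm+t(\alpha(\pm x)-\alpha_\pm)$ for $t\in[0,1]$. For $\pm x\geq r_\epsilon$ with $r_\epsilon$ chosen so that $|\alpha(\pm x)-\alpha_\pm|<\delta_0$, the whole path stays in the neighborhood of $\alpha_\pm$ where $\xi_{\alpha_t}$ is defined on $\lambda_\infty(\widetilde{\mathcal R_\epsilon})$, and \eqref{diff:xi} gives $\partial_X\xi_{\alpha_t}=-(\partial_X\lambda_{\alpha_t})/(\partial_\xi\lambda_{\alpha_t})$ for $X=\Re\alpha,\Im\alpha$. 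The estimate \eqref{eq:diffzetaxi} then reduces to showing $|\partial_X\xi_{\alpha_t}(\lambda_\infty(\xi))|\lesssim\epsilon^{-1}$ uniformly in $t\in[0,1]$ and $\xi\in\widetilde{\mathcal R_\epsilon}$, followed by integrating in $t$ and using $|\alpha(\pm x)-\alpha_\pm|$ as the total path length.

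The numerator $|\partial_X\lambda_{\alpha_t}(\xi_{\alpha_t}(\lambda_\infty(\xi)))|$ is bounded by direct differentiation of $\mathrm{Arccos}(\rho(\alpha)\cos\xi)$ in $\Re\alpha,\Im\alpha$; uniform boundedness along the path follows because $|\alpha_\pm|$ is bounded away from both $0$ and $1$ and, on the relevant range, $|\rho(\alpha)\cos\xi|$ stays bounded away from $1$. The denominator is handled using \eqref{dxil}, which reduces matters to the lower bound $|\sin\xi_{\alpha_t}(\lambda_\infty(\xi))|\gtrsim\epsilon$. Here I would invoke the uniform-continuity argument already used in Lemma \ref{lem:limzeta}: joint continuity of $(\alpha,\xi)\mapsto\xi_\alpha(\lambda_\infty(\xi))$ on the compact set $\{|\alpha-\alpha_\pm|\leq\delta_0\}\times\widetilde{\mathcal R_\epsilon}$, combined with $\xi_{\alpha_\pm}(\lambda_\infty(\xi))=\xi$, permits enlarging $r_\epsilon$ so that $|\xi_{\alpha_t}(\lambda_\infty(\xi))-\xi|$ is small enough to yield $|\sin\xi_{\alpha_t}(\lambda_\infty(\xi))|\geq\epsilon/2$, giving the required lower bound on $|\partial_\xi\lambda_{\alpha_t}|$.

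For the final assertion, I would estimate $|\sin\zeta(x,\xi)-\sin\xi|\leq C|\zeta(x,\xi)-\xi|\leq C'\epsilon^{-1}|\alpha(\pm x)-\alpha_\pm|$, using that $|\cos|$ is bounded on the strip $\overline{\mathcal R_{8b_0}}$; since $\alpha(\pm x)\to\alpha_\pm$, enlarging $r_\epsilon$ one more time makes this quantity $<\epsilon/2$, hence $|\sin\zeta(x,\xi)|\geq\epsilon-\epsilon/2=\epsilon/2$ and $\zeta(x,\xi)\in\widetilde{\mathcal R_{\epsilon/2}}$. The main obstacle is not any individual estimate but maintaining them uniformly across the entire parameter path $\{\alpha_t\}_{t\in[0,1]}$; specifically, one must ensure the lower bound $|\sin\xi_{\alpha_t}(\lambda_\infty(\xi))|\gtrsim\epsilon$ persists for every intermediate $t$, which is precisely why the joint continuity formulation of Lemma \ref{lem:limzeta}, rather than only pointwise convergence $\zeta(x,\xi)\to\xi$, is the crucial input.
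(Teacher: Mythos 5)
Your proof is correct and follows essentially the same route as the paper: reduce \eqref{eq:diffzetaxi} to the derivative bound $|\partial_X\xi_\alpha(\lambda_\infty(\xi))|\lesssim\epsilon^{-1}$ via \eqref{diff:xi} and \eqref{dxil}, obtain the key lower bound $|\sin\xi_\alpha(\lambda_\infty(\xi))|\gtrsim\epsilon$ from the uniform continuity established in Lemma \ref{lem:limzeta}, and then integrate in $\alpha$ from $\alpha_\pm$ to $\alpha(\pm x)$. You are somewhat more explicit than the paper about the need for this lower bound to hold uniformly along the whole interpolating segment $\alpha_t$ (rather than only at the endpoint $\alpha(\pm x)$), which the paper leaves implicit when it says ``the estimate \eqref{eq:diffzetaxi} follows from \eqref{diff:xi:bound2}''; that is a useful clarification but not a different argument.
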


\begin{proof}
First, we show that for any $\epsilon>0$, there exists $r_{\epsilon}>0$ s.t.
\begin{align}\label{diff:xi:bound2}
\sup_{|x|\geq r_\epsilon,\xi \in \widetilde{ \mathcal R_{\epsilon}}}|\partial_X \xi_{\alpha(x)}(\lambda_{\infty}(\xi) )|\lesssim \epsilon^{-1},\ X=\Re \alpha, \Im \alpha.
\end{align}
By \eqref{dxil} and \eqref{diff:xi}, it suffices to show
\begin{align*}
|\sin (\xi_{\alpha(x)}(\lambda_\infty(\xi)))|\gtrsim \epsilon,\  |x|\geq r_\epsilon,\ \xi \in \widetilde{ \mathcal R_{\epsilon}}.
\end{align*}
By Lemma \ref{lem:limzeta}, there exists $r_\epsilon$ s.t. the estimate \eqref{diff:xi:bound2} holds. Note that the implicit constant depends only on $\rho_{\infty}$. The estimate \eqref{eq:diffzetaxi} follows from \eqref{diff:xi:bound2}. The later assertion follows from \eqref{eq:diffzetaxi} and the definition of $\widetilde{\mathcal{R}_{\epsilon/2}}$ introduced in \eqref{def:Re}.
\end{proof}

\begin{proposition}\label{prop:zetal1}
Let $\epsilon>0$.
Then, there exists $r_\epsilon>0$ s.t. for $|x|\geq r_\epsilon$ and $\xi \in \widetilde {\mathcal R_\epsilon}$, we have
\begin{align}\label{eq:zeta:diff}
|\zeta(x+1,\xi)-\zeta(x,\xi)|\lesssim \epsilon^{-1}|\alpha(x+1)-\alpha(x)|.
\end{align}
In particular, for any $\xi \in \widetilde {\mathcal R_0}$, we have $\zeta(\cdot+1,\xi)-\zeta(\cdot,\xi) \in l^1(\Z)$.
\end{proposition}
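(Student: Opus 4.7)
The plan is to apply a mean value argument along the $\alpha$-parameter. Fix $\epsilon>0$ and choose $r_\epsilon$ large enough that, for $|x|\ge r_\epsilon$, both $\alpha(x)$ and $\alpha(x+1)$ lie in the disk $\{|\alpha-\alpha_\pm|<\delta_0\}$ appearing in the construction of $\zeta$, so that $\zeta(x,\xi)=\xi_{\alpha(x)}(\lambda_\infty(\xi))$ and analogously at $x+1$. Since $\xi_\alpha(\lambda)$ is analytic in $\Re\alpha, \Im\alpha$, the fundamental theorem of calculus applied to $s\mapsto\xi_{\alpha_s}(\lambda_\infty(\xi))$ along the straight segment $\alpha_s:=\alpha(x)+s(\alpha(x+1)-\alpha(x))$, $s\in[0,1]$, yields
\begin{equation*}
\zeta(x+1,\xi)-\zeta(x,\xi)=\int_0^1\Bigl[\Re(\alpha(x+1)-\alpha(x))\,\partial_{\Re\alpha}+\Im(\alpha(x+1)-\alpha(x))\,\partial_{\Im\alpha}\Bigr]\xi_{\alpha_s}(\lambda_\infty(\xi))\,ds.
\end{equation*}

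The next step is to upgrade the pointwise derivative bound \eqref{diff:xi:bound2} to a uniform bound $|\partial_X\xi_{\alpha_s}(\lambda_\infty(\xi))|\lesssim \epsilon^{-1}$ valid for all $s\in[0,1]$, $\xi\in\widetilde{\mathcal R_\epsilon}$, and $|x|\ge r_\epsilon$. Retracing the derivation of \eqref{diff:xi:bound2} via \eqref{diff:xi} and \eqref{dxil}, this reduces to showing $|\sin(\xi_{\alpha_s}(\lambda_\infty(\xi)))|\gtrsim \epsilon$ uniformly along the segment. The continuity argument in Lemma \ref{lem:limzeta} gives uniform continuity of $(\alpha,\xi)\mapsto\xi_\alpha(\lambda_\infty(\xi))$ on a compact neighborhood of $\{\alpha_\pm\}\times\widetilde{\mathcal R_\epsilon}$; enlarging $r_\epsilon$ if necessary, every $\alpha_s$ lies so close to $\alpha_\pm$ that $|\sin(\xi_{\alpha_s}(\lambda_\infty(\xi)))|\ge |\sin\xi|/2\ge\epsilon/2$. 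Substituting this uniform estimate into the integral above gives \eqref{eq:zeta:diff}.

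For the ``in particular'' statement, observe that $\sin\xi\neq 0$ everywhere on $\widetilde{\mathcal R_0}$: on $\mathcal R_{\Re}$ the points $\xi=0,\pi$ are excluded via $\mathcal R_E$, and on the complex part the restriction ``$\Im\xi>b_0$ if $\Re\xi=0,\pi$'' together with $\sin(\im\tau)=\im\sinh\tau\neq 0$ rules out the remaining zeros. Hence for any $\xi\in\widetilde{\mathcal R_0}$ we may set $\epsilon:=|\sin\xi|>0$, so that $\xi\in\widetilde{\mathcal R_\epsilon}$ and \eqref{eq:zeta:diff} applies for $|x|\ge r_\epsilon$; summing in $x$ and invoking the $l^1$-hypothesis on $\alpha(\cdot+1)-\alpha(\cdot)$ from \eqref{ass:long} delivers the $l^1$ conclusion (the finitely many remaining terms with $|x|<r_\epsilon$ are trivially summable). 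The main subtlety is the uniformization of the derivative bound along the segment joining $\alpha(x)$ and $\alpha(x+1)$, which is handled by the same continuity that underlies Lemma \ref{lem:limzeta}.
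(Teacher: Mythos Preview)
Your proof is correct and follows essentially the same route as the paper: both arguments reduce \eqref{eq:zeta:diff} to the derivative bound \eqref{diff:xi:bound2} via a mean-value/integration step in the $\alpha$-variable, exactly as in the proof of Proposition~\ref{prop:zetaxi}. Your version is in fact more explicit than the paper's, since you spell out the integration along the segment $\alpha_s$ and carefully uniformize the bound $|\sin(\xi_{\alpha_s}(\lambda_\infty(\xi)))|\gtrsim\epsilon$ over $s\in[0,1]$, a point the paper glosses over by simply writing ``As the proof of Proposition~\ref{prop:zetaxi}''.
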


\begin{proof}
As the proof of Proposition \ref{prop:zetaxi},  for $|x|\geq r_\epsilon$, we have
\begin{align*}
|\zeta(x+1,\xi)-\zeta(x,\xi)|&=|\xi_{\alpha(x+1)}(\lambda_{\infty}(\xi))-\xi_{\alpha(x)}(\lambda_{\infty}(\xi))| \lesssim \epsilon^{-1}|\alpha(x+1)-\alpha(x)|.
\end{align*}
Thus, we have \eqref{eq:zeta:diff}. $\zeta(\cdot+1,\xi)-\zeta(\cdot,\xi)\in l^1(\Z)$ follows from the above estimate and \eqref{ass:long}.
\end{proof}
We define the matrix $T(x, \xi)$ as
\begin{align}\label{trans2}
T(x, \xi):=T_{\lambda_\infty(\xi)}(x)=\rho(x)^{-1} \begin{pmatrix}e^{\im \lambda_\infty(\xi)} & \alpha(x)\\ \overline{\alpha(x)} & e^{-\im \lambda_\infty(\xi)}\end{pmatrix} ,\quad x\in\Z.
\end{align}
By the definition of $\zeta$, we see that if $|x|\geq r_{\epsilon}$ and $\xi \in \widetilde{\mathcal R_\epsilon}$,  $e^{\pm\im\zeta(x,\xi)}$ are the two eigenvalues of $T(x, \xi)$. We remark that $e^{\im \zeta(x,\xi)}\neq e^{-\im \zeta(x,\xi)}$ from Proposition \ref{prop:zetaxi}.

By direct computation, we set
\begin{align}\label{P}
P(x,\xi):=\begin{pmatrix} \alpha(x) & \alpha(x) \\ \rho(x)e^{\im \zeta(x,\xi)}-e^{\im\lambda_\infty(\xi)} & \rho(x)e^{-\im \zeta(x,\xi)}-e^{\im\lambda_\infty(\xi)} \end{pmatrix},\quad x\in\Z.
\end{align}
Then, we have
\begin{align}\label{P:det}
\det P(x,\xi)=-2\im \alpha(x)\rho(x) \sin \zeta(x,\xi).
\end{align}
Proposition \ref{prop:zetaxi} shows that the right-hand side of \eqref{P:det} is uniformly bounded from below for $|x|\geq r_\epsilon$ and $\xi \in \widetilde{\mathcal R_{\epsilon}}$.
Thus, we can diagonalize $T(x,\xi)$ such as
\begin{align*}
P(x,\xi)^{-1}T(x,\xi) P(x,\xi) = \begin{pmatrix} e^{\im \zeta(x,\xi)} & 0 \\ 0 & e^{-\im \zeta(x,\xi)}\end{pmatrix}.
\end{align*}

\begin{proposition}\label{lem:est:diffP}
We have
\begin{align}
\sum_{|x|\geq r_{\epsilon}} \| P(x+1,\xi)^{-1}P(x,\xi)-1 \|_{\mathcal L(\C^2)}\lesssim_\epsilon  \sum_{x\in\Z} |\alpha(x+1)-\alpha(x) |<\infty.
\end{align}
\end{proposition}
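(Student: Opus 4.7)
The plan is to write
\begin{align*}
P(x+1,\xi)^{-1}P(x,\xi)-1 = P(x+1,\xi)^{-1}\bigl(P(x,\xi)-P(x+1,\xi)\bigr),
\end{align*}
so the proof reduces to two ingredients: (i) a uniform bound on $\|P(x+1,\xi)^{-1}\|$ for $|x|\geq r_\epsilon$ and $\xi\in\widetilde{\mathcal R_\epsilon}$, and (ii) an entrywise estimate of the difference $P(x+1,\xi)-P(x,\xi)$ by $|\alpha(x+1)-\alpha(x)|$. Summing in $x$ and invoking \eqref{ass:long} then yields the claim.

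For (i), I will use the explicit cofactor formula together with the determinant identity \eqref{P:det}, $\det P(x+1,\xi)=-2\im\alpha(x+1)\rho(x+1)\sin\zeta(x+1,\xi)$. Since $|\alpha_{\pm}|\in(0,1)$ and $\alpha(x)\to\alpha_{\pm}$, there exists $r>0$ and $\delta>0$ such that $\delta\le|\alpha(x)|,\rho(x)\le 1-\delta$ for $|x|\ge r$. By Proposition \ref{prop:zetaxi}, enlarging $r_\epsilon$ if necessary, we have $\zeta(x+1,\xi)\in\widetilde{\mathcal R_{\epsilon/2}}$ whenever $\xi\in\widetilde{\mathcal R_\epsilon}$ and $|x|\ge r_\epsilon$, hence $|\sin\zeta(x+1,\xi)|\ge \epsilon/2$. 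This gives $|\det P(x+1,\xi)|\gtrsim \epsilon$; the entries of the cofactor matrix of $P(x+1,\xi)$ are uniformly bounded (since $\rho\le 1$, $|\alpha|\le 1$, and $|e^{\pm\im\zeta}|\le e^{2b_0}$ on $\mathcal R_{2b_0}$), so $\|P(x+1,\xi)^{-1}\|\lesssim \epsilon^{-1}$.

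For (ii), I estimate the difference of $P$ entrywise. The top row contributes $\alpha(x+1)-\alpha(x)$ trivially. For the bottom row, the factor $\rho(x+1)-\rho(x)$ satisfies $|\rho(x+1)-\rho(x)|\lesssim |\alpha(x+1)-\alpha(x)|$ because $\sqrt{1-|\cdot|^2}$ is Lipschitz on $\{|\alpha|\le 1-\delta\}$. For $e^{\pm\im\zeta(x+1,\xi)}-e^{\pm\im\zeta(x,\xi)}$, the mean value estimate combined with Proposition \ref{prop:zetal1} gives
\begin{align*}
|e^{\pm\im\zeta(x+1,\xi)}-e^{\pm\im\zeta(x,\xi)}|\le e^{2b_0}|\zeta(x+1,\xi)-\zeta(x,\xi)|\lesssim \epsilon^{-1}|\alpha(x+1)-\alpha(x)|.
\end{align*}
Writing $\rho(x+1)e^{\pm\im\zeta(x+1,\xi)}-\rho(x)e^{\pm\im\zeta(x,\xi)}$ as $(\rho(x+1)-\rho(x))e^{\pm\im\zeta(x+1,\xi)}+\rho(x)(e^{\pm\im\zeta(x+1,\xi)}-e^{\pm\im\zeta(x,\xi)})$ and combining the two previous bounds yields $\|P(x+1,\xi)-P(x,\xi)\|\lesssim_\epsilon |\alpha(x+1)-\alpha(x)|$.

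Combining (i) and (ii) gives $\|P(x+1,\xi)^{-1}P(x,\xi)-1\|\lesssim_\epsilon |\alpha(x+1)-\alpha(x)|$ for $|x|\ge r_\epsilon$, and summing over $|x|\ge r_\epsilon$ gives the assertion. I do not expect any serious obstacle; the only delicate point is ensuring that $r_\epsilon$ can be chosen uniformly to simultaneously deliver the lower bounds on $|\alpha(x)|$, $\rho(x)$, and $|\sin\zeta(x,\xi)|$, which is handled by combining the smallness assumption in \eqref{ass:long} with Proposition \ref{prop:zetaxi}.
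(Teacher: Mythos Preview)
Your proof is correct and follows essentially the same approach as the paper's: factor $P(x+1,\xi)^{-1}P(x,\xi)-1=P(x+1,\xi)^{-1}\bigl(P(x,\xi)-P(x+1,\xi)\bigr)$, bound the inverse uniformly, and estimate the difference via Proposition~\ref{prop:zetal1} and the explicit formula~\eqref{P}. The paper's version is extremely terse (it simply asserts the inverse bound and says the difference estimate ``is obvious from Proposition~\ref{prop:zetal1} and~\eqref{P}''), whereas you have supplied the entrywise details and the explicit use of~\eqref{P:det} and Proposition~\ref{prop:zetaxi} for the determinant lower bound; these are exactly the steps the paper leaves implicit.
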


\begin{proof}
Since $\|P(x+1,\xi)\|_{\mathcal L(\C^2)}\lesssim_\epsilon 1$, it suffices to show
\begin{align*}
\| P(x,\xi)-P(x+1,\xi) \|_{\mathcal L(\C^2)}\lesssim_\epsilon |\alpha(x+1)-\alpha(x)|.
\end{align*}
However, this is obvious from Proposition \ref{prop:zetal1} and \eqref{P}.
\end{proof}

\subsection{Construction of modified plane waves}\label{subsec:modwave}
We are now in the position to construct modified plain waves.
We set
\begin{align*}
\nu_\pm(\xi):=\begin{pmatrix} \alpha_{+} \\ \ri e^{\pm\im\xi}-e^{\im \lambda_\infty(\xi)} \end{pmatrix},\quad \xi\in\widetilde{\mathcal{R}}_{0}
\end{align*}
and
\begin{align}\label{eq:modphase}
Z(x,\xi) := \sum_{y=0}^{x-1} \zeta(y,\xi).
\end{align}
Here we use the convention $\sum_{y=0}^{x-1}=-\sum_{y=x-1}^0 $ for $x\leq 0$. 

\begin{proposition}\label{prop:main}
Let $\xi \in \widetilde{\mathcal R}_{0}$.
Then, there exist solutions of
\begin{align}\label{modified}
\phi_\pm(x+1,\xi)=T(x,\xi)\phi_\pm(x,\xi),\quad \phi_\pm (x,\xi)-e^{\pm \im Z(x,\xi)}\nu_{\pm} (\xi)\to 0,\ x\to \pm\infty.
\end{align}
Moreover, we set $m_\pm(x,\xi):=e^{\mp\im Z(x,\xi)}\phi_\pm(x,\xi)$. Then it follows that 
\begin{align}\label{m:reg}
m_\pm \in C(\widetilde{\mathcal R_0},l^\infty(\Z,\C^2)) \cap C^{\omega}(\widetilde{\mathcal R_0},l^\infty(\Z,\C^2)).
\end{align}
\end{proposition}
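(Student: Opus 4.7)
The plan is to diagonalize $T(x,\xi)$ via $P(x,\xi)$, factor out the modified phase $e^{\pm\im Z(x,\xi)}$, and solve the resulting recurrence by first expressing the ``oscillatory'' component of $g_\pm$ in terms of the ``regular'' one via discrete variation of parameters, and then solving a closed discrete Volterra equation for the regular component by contraction mapping. Setting $\phi_+(x,\xi)=e^{\im Z(x,\xi)}P(x,\xi)\,g_+(x,\xi)$, using $T=P\,\mathrm{diag}(e^{\im\zeta},e^{-\im\zeta})\,P^{-1}$ and $Z(x+1,\xi)-Z(x,\xi)=\zeta(x,\xi)$, the transfer equation in \eqref{modified} becomes
\[
g_+(x+1)=(I+R(x,\xi))\,\mathrm{diag}(1,e^{-2\im\zeta(x,\xi)})\,g_+(x,\xi),\qquad R(x,\xi):=P(x+1,\xi)^{-1}P(x,\xi)-I,
\]
with target $g_+(x,\xi)\to(1,0)^{\top}$ as $x\to+\infty$. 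By Proposition \ref{lem:est:diffP}, $\|R(\cdot,\xi)\|\in l^1(\Z)$ uniformly on compact subsets of $\widetilde{\mathcal R_\epsilon}$.

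The second component of the recurrence is the first-order linear equation $g_{+,2}(x+1)=(1+R_{22}(x))e^{-2\im\zeta(x)}g_{+,2}(x)+R_{21}(x)g_{+,1}(x)$. Solving it backward with $g_{+,2}(+\infty)=0$ by discrete variation of parameters yields the explicit formula
\[
g_{+,2}(x,\xi)=-\sum_{y\ge x}e^{2\im(Z(y+1,\xi)-Z(x,\xi))}\prod_{z=x}^{y}(1+R_{22}(z,\xi))^{-1}\,R_{21}(y,\xi)\,g_{+,1}(y,\xi),
\]
whose sum converges because $|e^{2\im(Z(y+1)-Z(x))}|\le 1$ for all $y\ge x$ when $\Im\xi\ge 0$ (as $\Im Z$ is nondecreasing). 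Substituting into the first component and using the key cancellation $e^{-2\im\zeta(y)}e^{2\im(Z(z+1)-Z(y))}=e^{2\im(Z(z+1)-Z(y+1))}$ yields a closed discrete Volterra equation $u(x)=1-\sum_{z\ge x}K(x,z,\xi)u(z)$ for $u:=g_{+,1}$, whose kernel satisfies $\sup_{x\ge N}\sum_{z\ge x}|K(x,z,\xi)|\lesssim \|R_{11}\|_{l^1(\ge N)}+\|R_{12}\|_{l^1(\ge N)}\|R_{21}\|_{l^1(\ge N)}$. Choosing $N=N_\epsilon$ so that this bound is $\le \tfrac12$ uniformly on compact subsets of $\widetilde{\mathcal R_\epsilon}$ (available by Propositions \ref{prop:zetaxi}--\ref{prop:zetal1}), the Volterra operator is a strict contraction on a ball in $l^\infty(\Z_{\ge N})$, yielding $u$, from which $g_{+,2}$ is reconstructed; the pair is extended to all of $\Z$ by propagation through the invertible $T(x,\xi)$. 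The $\phi_-$ construction is symmetric, summing from $-\infty$.

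For \eqref{m:reg}, each Picard iterate of the closed Volterra equation is continuous and analytic in $\xi$ on $\widetilde{\mathcal R_\epsilon}$ because $P$, $\zeta$, and $Z$ are so (Lemma \ref{lem:bihol}, \eqref{pointfreq}--\eqref{eq:modphase}); the contraction constant is uniform on compact subsets, so these properties pass to the limit $u$, then to $g_{+,2}$ and to $m_\pm=Pg_\pm$. Since $\widetilde{\mathcal R_0}=\bigcup_{\epsilon>0}\widetilde{\mathcal R_\epsilon}$, the regularity extends to the whole domain. The main technical obstacle is the exponential growth of $e^{-2\im\zeta(x,\xi)}$ for $\Im\xi>0$, which prevents the naive $l^1$-contraction argument that worked in the short-range case \cite{MSSSSdis}; the resolution is precisely the above decoupling, in which the ``bad'' component is solved explicitly first and produces the cancellation that renders the effective kernel $l^1$-summable even for complex $\xi$. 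This step crucially depends on the pointwise diagonalization $\zeta(x,\xi)$ introduced in Section \ref{subsec:constant}.
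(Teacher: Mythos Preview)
Your argument is correct but more elaborate than the paper's. Both begin identically: substitute $\phi_+=e^{\im Z}P\,g_+$ and use Proposition~\ref{lem:est:diffP} to reduce \eqref{modified} to $g_+(x+1)=(I+R(x))\,\mathrm{diag}(1,e^{-2\im\zeta(x)})\,g_+(x)$ with $R=P(x{+}1)^{-1}P(x)-I\in l^1$. The paper then writes this directly as a $\C^2$-valued Volterra equation
\[
\psi(x)=\begin{pmatrix}1\\0\end{pmatrix}-\sum_{y\ge x}\begin{pmatrix}1&0\\0&e^{2\im\sum_{w=x}^{y}\zeta(w)}\end{pmatrix}V(y)\,\psi(y),
\]
and observes that $\bigl|e^{2\im\sum_{w=x}^{y}\zeta(w)}\bigr|\le 1$ since $\Im\zeta\ge 0$ on $\widetilde{\mathcal R_0}$; hence the Neumann series converges in $l^\infty$ once $\|V\|_{l^1(\ge x_0)}\le\tfrac12$. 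Your decoupling step---solving for $g_{+,2}$ in terms of $g_{+,1}$ and then substituting to obtain a scalar Volterra equation---is valid and rests on exactly the same exponential bound, but is unnecessary: the growth of $e^{-2\im\zeta}$ that you flag as the obstacle is automatically absorbed by the backward Duhamel form, which produces the bounded inverse propagator $e^{+2\im\sum\zeta}$ in the second slot. The genuine long-range obstacle is that the perturbation from the \emph{constant} coin is not $l^1$; it is resolved by the pointwise diagonalization via $P(x,\xi)$ (which you correctly perform), not by the component elimination. What your route buys is a scalar fixed-point problem; what it costs is the extra bookkeeping with $\prod(1+R_{22})^{-1}$ and the substitution identity.

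One point you leave implicit: iterating the invertible $T(x,\xi)$ backward defines $\phi_+$ on all of $\Z$, but does not by itself give $m_+\in l^\infty(\Z)$, only $l^\infty(\Z_{\ge N})$. The paper fills this with a second Neumann-series argument on $\Z_{\le x_1}$ (for $x_1$ sufficiently negative) with prescribed data $\psi(x_1)$, again using $\bigl|e^{2\im\sum_{w=x}^{x_1-1}\zeta(w)}\bigr|\le 1$. Your sketch should include this step to justify \eqref{m:reg} on the full lattice.
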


\begin{remark}
If $\alpha,\theta\in l^1(\Z)$ (the short-range case), then $Z(x,\xi)-\xi x - c_\pm \to 0$ as $x\to \pm\infty$ for some constants $c_\pm$.
In the long-range case, $Z(x,\xi)$ is the desired phase correction.
\end{remark}

\begin{proof}
We only consider the $+\infty$ case.
Since $\widetilde{\mathcal R_0}=\cup_{\epsilon>0}\widetilde{\mathcal R_\epsilon}$, it suffices to show the statement of Proposition \ref{prop:main} for $\widetilde{\mathcal R_0}$ replaced by $\widetilde{\mathcal R_\epsilon}$ with arbitrary $\epsilon>0$.

For $\xi \in \widetilde{\mathcal R_\epsilon}$, we set
\begin{align}\label{def:psi}
\psi(x,\xi):=e^{-\im Z(x,\xi)}\widetilde P(x,\xi)^{-1}\phi_+(x,\xi),
\end{align}
where
\begin{align*}
\widetilde P(x,\xi)=\begin{cases} P(x,\xi) & |x|\geq r_\epsilon, \\
P_\infty(\xi):=\begin{pmatrix} \nu_+(\xi) & \nu_-(\xi)\end{pmatrix} &0\le x<r_\epsilon.
\end{cases}
\end{align*}
We remark that $P_\infty(\xi)$ diagonalizes $T_\infty(\xi):=\lim_{x\to \infty} T_{\lambda_\infty(\xi)}(x)$.
Since $Z(x+1,\xi)=Z(x,\xi)+\zeta(x,\xi)$ and
\begin{align*}
T(x,\xi)=e^{\im \zeta(x,\xi)}\widetilde P(x,\xi)\(A(x,\xi)+B(x,\xi)\){\widetilde P(x,\xi)}^{-1},
\end{align*}
where
\begin{align*}
A(x,\xi):=\begin{pmatrix} 1 & 0 \\ 0 & e^{-2\im \zeta(x,\xi)}\end{pmatrix},
\end{align*}
and
\begin{align*}
B(x,\xi):=\begin{cases} e^{-\im \zeta(x,\xi)} \widetilde P(x,\xi)^{-1}\(T(x,\xi)-T_\infty(\xi)\) \widetilde P(x,\xi) & |x|<r_\epsilon,
  \\ 0 & |x|\geq r_\epsilon. \end{cases}
\end{align*}
By substituting \eqref{def:psi} into \eqref{modified}, we obtain
\begin{align}\label{eq:psi}
\psi(x+1,\xi)=\(A(x,\xi)+V(x,\xi)\)\psi(x,\xi),\quad \psi(x,\xi)- \begin{pmatrix} 1 \\ 0 \end{pmatrix} \to 0,\ x\to \infty,
\end{align}
where
\begin{align*}
V(x,\xi)=\({\widetilde P(x+1,\xi)}^{-1}\widetilde P(x,\xi)-1\)A(x,\xi)+{\widetilde P(x+1,\xi)}^{-1}\widetilde P(x,\xi)B(x,\xi).
\end{align*}
By Proposition \ref{lem:est:diffP}, for $x\geq r_\epsilon $, we have
\begin{align*}
\sup_{\xi \in \widetilde{\mathcal R_\epsilon}}\|V(\cdot,\xi)\|_{l^1(\Z_{\cdot\geq x_{\epsilon}},\mathcal L(\C^2))}\lesssim_\epsilon \sum_{x_{\epsilon}\leq y}|\alpha(y+1)-\alpha(y)|< \infty.
\end{align*}

We divide $\Z$ into three parts.
First, we take $x_0\geq r_\epsilon $ sufficiently large so that 
$$\displaystyle\sup_{\xi \in \widetilde{\mathcal R}_\epsilon}\|V(\cdot,\xi)\|_{l^1(\Z_{\cdot\geq x_0},\mathcal L(\C^2))}\leq \displaystyle\frac{1}{2}.$$
We consider \eqref{eq:psi} in the region $x\geq x_0$.
Rewriting \eqref{eq:psi} in the Duhamel form, it suffices to solve
\begin{align}\label{eq:Neu1}
\psi(x,\xi)=\begin{pmatrix} 1 \\ 0 \end{pmatrix}- \(D(\xi)\psi(\cdot,\xi)\)(x),
\end{align}
where
\begin{align*}
\(D(\xi)v\)(x):=\sum_{y=x}^\infty \begin{pmatrix} 1 & 0 \\ 0 & e^{2\im \sum_{w=x}^y \zeta(w,\xi)} \end{pmatrix}V(y,\xi)v(y).
\end{align*}
Since $|e^{2\im \sum_{w=x}^y \zeta(w,\xi)}|\leq 1$, we have
\begin{align}
\|D(\xi)\|_{\mathcal L \(l^\infty(\Z_{\cdot\geq x_0},\C^2) \) }\leq \|V(\cdot,\xi)\|_{l^1(\Z_{\cdot\geq x_0},\mathcal L(\C^2))}\leq 1/2.
\end{align}
Therefore, we have
\begin{align*}
\psi(x,\xi)=\begin{pmatrix} 1 \\ 0 \end{pmatrix}+\sum_{n=1}^\infty \(D^n(\xi)\begin{pmatrix} 1 \\ 0 \end{pmatrix}\)(x),
\end{align*}
The above infinite series converges in $l^\infty(\Z_{\cdot\geq x_0},\C^2)$.
Therefore, we have the conclusion with $\Z$ replaced by $\Z_{\cdot\geq x_0}$.

Next, we take $x_1<x_{0}$ so that 
$$\displaystyle\sup_{\xi \in \widetilde{\mathcal R}_\epsilon}\|V(\cdot,\xi)\|_{l^1(\Z_{\cdot\leq x_1},\mathcal L(\C^2))}\leq \displaystyle\frac{1}{2}.$$
In the finite region $(x_1,x_0)\cap \Z$, we simply define $\psi$ using \eqref{eq:psi}.

Finally, to define $\psi$ in the region $x<x_1$, notice that as \eqref{eq:Neu1}, it suffices to solve
\begin{align*}
\psi(x,\xi)=\begin{pmatrix} 1 & 0 \\ 0 & e^{2\im \sum_{w=x}^{x_1-1}\zeta(w,\xi)} \end{pmatrix} \psi(x_1,\xi)-\(E(\xi)\psi(\xi,\cdot)\)(x),
\end{align*}
where
\begin{align*}
\(E(\xi)v\)(x):=\sum_{y=x}^{x_1-1} \begin{pmatrix} 1 & 0 \\ 0 & e^{2\im \sum_{w=x}^{y}\zeta(w,\xi)} \end{pmatrix}V(y,\xi)v(y). 
\end{align*}
Since we can show $\|E(\xi)\|_{\mathcal L(l^\infty(\Z_{\cdot\leq x_1},\C^2))}\leq 1/2$, we can express $\psi(x,\xi)$ as
\begin{align*}
\psi(x,\xi)=\psi(x_1,\xi)+\sum_{n=1}^\infty E(\xi)^n \psi(x_1,\xi).
\end{align*}
Therefore, we have the conclusion.
\end{proof}

\section{Proof of Theorem \ref{main:spec}}\label{sec:prmain}

We first prove the non-existence of embedded eigenvalues.

\begin{proof}[Proof of \eqref{no:emb} of Theorem \ref{main:spec}]
We only consider the eigenvalues in the upper half plane of $\C$. We suppose that there exists an eigenvalue $e^{\im\lambda}$ with $|\cos \lambda|<\rho_{\infty}$. Then, there exists $\xi \in \T\setminus\{0,\pi\}$ and  $\psi\in \mathcal H$ such that  $\lambda_\infty(\xi)=\lambda$ and $U\psi =e^{\im \lambda_\infty(\xi)}\psi$. On the other hand, $\phi_+(\cdot,\xi)$ is another solution of \eqref{eq:equiv} which is bounded on $x\geq 0$.
Corollary \ref{cor:Wron} shows that $\phi_+(\cdot,\xi)$ and $\psi$ are linearly dependent.
However, this is a contradiction.
\end{proof}

For the proof of \eqref{no:sing}, we recall some basic facts which will be a direct consequence of the limiting absorption principle.

By Proposition \ref{prop:kernel} and Proposition \ref{prop:main}, for $\xi \in \widetilde{\mathcal R_0}$, we have
\begin{align*}
\((U-e^{-\im\lambda_\infty(\xi)})^{-1}u\) (x) = \sum_{y\in\Z} K_\xi(x,y)u(y),
\end{align*}
where
\begin{align}\label{ker:explicit}
K_\xi(x,y):=&e^{-\im \lambda_\infty(\xi)} W_\xi^{-1} \(e^{\im \sum_{w=x}^{y-1}\zeta(w,\xi)}m_-(x,\xi) m_+(y,\xi)^\top \begin{pmatrix} 0 & 1_{<y}(x)\\ 1_{\leq y}(x) & 0 \end{pmatrix}\right. \\ &\quad \left.+ e^{\im \sum_{w=y}^{x-1}\zeta(w,\xi)}m_+(x,\xi) m_-(y,\xi)^\top \begin{pmatrix} 0 & 1_{\geq y}(x)\\ 1_{> y}(x) & 0 \end{pmatrix} \).\nonumber
\end{align}

Before going to next, we introduce weighted $l^2$-space. For $\sigma\in\R$, we define $l^2_{\sigma}(\Z,\C^2)$ as
\begin{align}\label{def:weightedl2}
l^2_{\sigma}(\Z,\C^2):=\left\{u:\Z\rightarrow\C^2|\ \displaystyle\sum_{x\in\Z}\|u(x)\|^{2}_{\C^2}\langle x\rangle^{2\sigma}<\infty\right\},\quad \langle u, v\rangle_{\sigma}:=\displaystyle\sum_{x\in\Z}\langle u(x), v(x)\rangle_{\C^2}\langle x\rangle^{2\sigma},
\end{align} 
where $\langle x\rangle:=\sqrt{1+x^2}$. We simply write $l^2_{\sigma}(\Z,\C^2)$ as $l^2_{\sigma}$ if there is no danger of confusion. 
\begin{proposition}\label{limiting absoption}
Let $\xi\in \mathcal R_{\Re}\setminus \mathcal R_E$.
Then, $\lim_{\varepsilon\to +0}(U-e^{\im \lambda_\infty(\xi+\im \varepsilon)})^{-1}$ exists in $\mathcal L(l^2_{\sigma},l^2_{-\sigma})$ for $\sigma>1/2$.
\end{proposition}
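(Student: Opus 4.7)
The plan is to combine the explicit resolvent kernel from Proposition~\ref{prop:kernel} with the modified plane waves of Proposition~\ref{prop:main}, and to pass to the boundary $\varepsilon \to 0^+$ by a pointwise-convergence-plus-uniform-bound argument. Fix $\xi_0 \in \mathcal R_\Re \setminus \mathcal R_E$ and choose $\epsilon > 0$ with $\xi_0 \in \widetilde{\mathcal R_{2\epsilon}}$, so that the segment $\{\xi_0 + \im \tau : \tau \in [0,\tau_0]\}$ lies in $\widetilde{\mathcal R_\epsilon}$ for some small $\tau_0 > 0$. For $\tau > 0$, by the mapping property of $\mathrm{Arccos}$ recalled after \eqref{complexcos}, $e^{\im \lambda_\infty(\xi_0+\im\tau)}$ lies strictly inside the unit disc, so the resolvent is bounded on $\mathcal H$; applying Proposition~\ref{prop:kernel} with $v_1 = \phi_+(\cdot,\xi_0+\im\tau)$ and $v_2 = \phi_-(\cdot,\xi_0+\im\tau)$ yields its kernel $K_{\xi_0+\im\tau}(x,y)$ in the form \eqref{ker:explicit}.

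The central task is a uniform bound $|K_{\xi_0+\im\tau}(x,y)| \leq C$ for all $x, y \in \Z$ and $\tau \in [0,\tau_0]$. This rests on three ingredients: (a) $\sup_{\tau \in [0,\tau_0]} \|m_\pm(\cdot, \xi_0 + \im\tau)\|_{l^\infty}$ is finite by the continuity \eqref{m:reg}; (b) the exponential phase $e^{\im \sum_{w=x}^{y-1} \zeta(w, \xi_0+\im \tau)}$ (and its symmetric partner for $x \geq y$) has modulus $\leq C$ uniformly, since on the tail $|w| \geq r_\epsilon$ we have $\Im\,\zeta(w, \xi_0+\im \tau) \geq 0$ (the sheet of $\xi_{\alpha(w)}$ being chosen in $\overline{\C_+}$ whenever $\lambda_\infty(\xi_0+\im\tau) \in \overline{\C_+}$), while the intermediate region $|w| < r_\epsilon$ contains only finitely many $w$'s and hence contributes a bounded factor; (c) $|W_{\xi_0+\im \tau}| \geq c > 0$. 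Granted (a)--(c), a Schur-type estimate together with $\sum_x \langle x \rangle^{-2\sigma} < \infty$ for $\sigma > 1/2$ gives $\| K_{\xi_0+\im\tau} \|_{\mathcal L(l^2_\sigma, l^2_{-\sigma})} \leq C'$ uniformly in $\tau$. Pointwise convergence $K_{\xi_0+\im\tau}(x,y) \to K_{\xi_0}(x,y)$ as $\tau \to 0^+$ then follows from joint continuity of $m_\pm$, $\zeta$, $Z$ and $\lambda_\infty$, and dominated convergence upgrades this to convergence in $\mathcal L(l^2_\sigma, l^2_{-\sigma})$.

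The main obstacle is the non-vanishing (c) of the Wronskian. For $\tau > 0$ it is immediate: if $W_{\xi_0+\im\tau} = 0$, then $\phi_+(\cdot,\xi_0+\im \tau)$ and $\phi_-(\cdot,\xi_0+\im \tau)$ are proportional; since both genuinely decay at their respective infinities (as the leading phases $e^{\pm \im Z}$ have strictly positive imaginary argument for $|x| \to \infty$ once $\Im\,\zeta > 0$), they would jointly produce an $l^2$-eigenvector of the unitary $U$ at an eigenvalue of modulus $<1$, which is impossible. On the real axis, however, those leading phases have modulus one and the argument fails: non-vanishing of $W_{\xi_0}$ corresponds to absence of ``half-bound states'' and is not directly supplied by \eqref{no:emb}. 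To settle this I would exploit the symmetry $\zeta(x,-\xi_0) = -\zeta(x,\xi_0)$ of Lemma~\ref{lem:zetareodd} and $\nu_+(-\xi_0) = \nu_-(\xi_0)$ to expand $\phi_-(\cdot,\xi_0) = a\,\phi_+(\cdot,\xi_0) + b\,\phi_+(\cdot,-\xi_0)$ in the basis built from $\pm\xi_0$, which reduces $W_{\xi_0} = 2\im b\, \alpha_+ \rho_\infty \sin \xi_0$ (by evaluating $\det(\phi_+(\cdot,\xi_0),\phi_+(\cdot,-\xi_0))$ at $+\infty$) to the non-vanishing of the ``transmission-type'' coefficient $b(\xi_0)$; this I would obtain by a continuity argument from $\tau > 0$, where $W_{\xi_0+\im\tau}$ has just been shown to be non-zero, combined with the asymptotics built into Proposition~\ref{prop:main} and the $l^1$-summability from \eqref{ass:long}.
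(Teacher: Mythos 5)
Your overall strategy coincides with the paper's: bound the $\mathcal L(l^2_\sigma,l^2_{-\sigma})$ norm by the Hilbert--Schmidt norm of the weighted kernel $\langle x\rangle^{-\sigma}K_\xi(x,y)\langle y\rangle^{-\sigma}$, and pass to the limit $\tau\to0^+$ using the continuity of $m_\pm$ from Proposition \ref{prop:main} together with the phase bound $|e^{\im\sum_w\zeta(w,\cdot)}|\leq1$ coming from $\Im\zeta\geq0$. Two minor remarks on your item (b): the phase bound needs no separate treatment of the intermediate region, since $\Im\zeta(x,\xi)\geq0$ holds for \emph{every} $x\in\Z$ by the very construction of $\zeta$ (for $|x|\ge r_0$, $\xi_{\alpha(x)}$ takes values in $\overline{\mathcal R_{8b_0}}\subset\{\Im\ge0\}$, and for $|x|<r_0$, $\zeta(x,\xi)=\xi$); and for $\tau>0$ the excluded eigenvalue $e^{\im\lambda_\infty(\xi_0+\im\tau)}$ may have modulus $>1$ as well as $<1$ depending on the sheet and the sign of $\Re\xi_0$ --- either way it is not on the unit circle, which is what actually matters.

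The genuine gap is in item (c). You are right to flag the non-vanishing of $W_\xi$ on $\mathcal R_\Re\setminus\mathcal R_E$ as the crux (\eqref{ker:explicit} is undefined at $\tau=0$ without it, and the paper passes over the point in a single sentence), and the reduction $W_\xi=-2\im\alpha_+\rho_\infty\,b(\xi)\sin\xi$ via the expansion $\phi_-(\cdot,\xi)=a(\xi)\phi_+(\cdot,\xi)+b(\xi)\phi_+(\cdot,-\xi)$ is the correct setup (your formula has the opposite sign, which is harmless). But the closing ``continuity argument from $\tau>0$'' cannot do the job: continuity together with $W_{\xi+\im\tau}\neq0$ gives only $W_\xi=\lim_{\tau\to0^+}W_{\xi+\im\tau}$, and a limit of nonzero numbers may perfectly well be zero --- that is exactly the half-bound-state scenario you yourself describe. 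What is needed is an algebraic, not a limiting, argument. Concretely: compute the constant $\det\bigl(\phi_+(\cdot,\xi),\phi_+(\cdot,-\xi)\bigr)$ once from the $+\infty$ asymptotics (giving $\det(\nu_+,\nu_-)=-2\im\alpha_+\rho_\infty\sin\xi$) and once by expanding $\phi_+(\cdot,\pm\xi)$ in the basis $\phi_-(\cdot,\pm\xi)$ at $-\infty$, obtaining a relation of the form $\tilde{a}(\xi)\tilde{a}(-\xi)-\tilde{b}(\xi)\tilde{b}(-\xi)=-1$ with $\tilde{b}=b$; then use the conjugation symmetry of the transfer matrix for real $\lambda_\infty(\xi)$, namely $\overline{T(x,\xi)}=J\,T(x,\xi)\,J$ with $J$ the $2\times2$ swap matrix, together with $J\,\overline{\nu_+(\xi)}\parallel\nu_-(\xi)$, to relate the coefficients at $\pm\xi$ by complex conjugation (up to unit-modulus factors), which turns the relation into a sign-definite inequality $|\tilde{a}|^2-|\tilde{b}|^2<0$ and hence forces $b(\xi)\neq0$. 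Without this (or an equivalent scattering-matrix) argument, the pointwise convergence of $K_{\xi+\im\tau}(x,y)$ as $\tau\to0^+$, which your proof relies on, is not established.
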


\begin{proof}
Since
\begin{align*}
\|(U-e^{-\im\lambda_\infty(\xi)})^{-1}\|_{\mathcal L(l^2_{\sigma},l^2_{-\sigma})}\leq\|\<x\>^{-\sigma}K_\xi(x,y) \<y\>^{-\sigma}\|_{l^2(\Z^2,\mathcal L(\C^2))},
\end{align*}
it suffices to show that $\<x\>^{-\sigma}K_{\xi+\im\epsilon}(x,y)\<y\>^{-\sigma}\to \<x\>^{-\sigma}K_\xi (x,y)\<y\>^{-\sigma}$ as $\epsilon\to +0$ in $l^2(\Z^2,\mathcal L(\C^2))$.
However, this immediately follows from Proposition \ref{prop:main} and the fact that $\Im \zeta(x,\xi)\geq 0$ for all $x\in \Z$ and $ \xi\in \widetilde{\mathcal R}_0$
\end{proof}

\begin{proof}[Proof of \eqref{no:sing} of Theorem \ref{main:spec}]
Stone's formula for unitary operators and the limiting absorption principle (Proposition \ref{limiting absoption}) imply the non-existence of the singular continuous spectrum.
\end{proof}

\appendix

\section{Gauge transformation}

In this section, we prove Proposition \ref{gauge} to simplify the form of coin operator $C_{\alpha,\beta,\theta}$ defined in \eqref{def:coin:pre}.

The existence of unitary operator $G$ has been established in \cite{O}. However, the convergence of $\alpha'$ is not considered in it. For completeness, we discuss this proposition.
\begin{proof}[Proof of Proposition \ref{gauge}]

We write $\beta(x)=e^{\im b(x)}|\beta(x)|$ with $b(x)\in \R/2\pi\Z$. We introduce two $\R$-valued functions $g, h$ on $\Z$ and set the unitary operator $G$ as
$$ (Gu)(x):=G(x)u(x),\quad G(x)=\begin{pmatrix}e^{\im g(x)} & 0 \\ 0 & e^{\im h(x)}\end{pmatrix}, \quad u\in \mathcal{H},\quad x\in\Z.$$
We will decide the values of $g(x)$ and $h(x)$, later. We have
\begin{align*}
\(GUG^{\ast}u\)(x)&=\begin{pmatrix}e^{\im g(x)} & 0 \\ 0 & e^{\im h(x)}\end{pmatrix}(SCW^{\ast}u)(x)
\\
&=\begin{pmatrix}e^{\im g(x)} & 0 \\ 0 & e^{\im h(x)}\end{pmatrix}\begin{pmatrix}(CW^{\ast}\ur)(x-1) \\ (CW^{\ast}\ul)(x+1)\end{pmatrix}
\\
&=\begin{pmatrix}e^{\im g(x)}e^{\im\theta(x-1)}e^{\im b(x-1)}e^{-\im g(x-1)}|\beta(x-1)|\ur(x-1)+e^{\im g(x)}e^{\im\theta(x-1)}e^{-\im h(x-1)}\overline{\alpha(x)}\ul(x-1)
\\
-e^{\im h(x)}e^{\im\theta(x+1)}e^{-\im g(x+1)}\alpha(x)\ur(x+1)+e^{\im h(x)}e^{\im\theta(x+1)}e^{-\im b(x+1)}|\beta(x+1)|e^{-\im h(x+1)}\ul(x+1)
\end{pmatrix}
\\
&=(S\tilde{C}u)(x),
\end{align*}
where 
$$ \tilde{C}(x)=\begin{pmatrix}e^{\im g(x+1)}e^{\im\theta(x)}e^{\im b(x)}e^{-\im g(x)}|\beta(x)| & e^{\im g(x+1)}e^{\im\theta(x)}e^{-\im h(x)}\overline{\alpha(x)}
\\
-e^{\im h(x-1)}e^{\im\theta(x)}e^{-\im g(x)}\alpha(x) & e^{\im h(x-1)}e^{\im\theta(x)}e^{-\im b(x)}e^{-\im h(x)}|\beta(x)|
\end{pmatrix}.$$

We consider following equations to vanish phases of (1, 1) and (2, 2) components of $\tilde{C}(x)$, respectively:
\begin{align}\label{1122}
g(x+1)+\theta(x)+b(x)-g(x)=0,\quad h(x-1)+\theta(x)-b(x)-h(x)=0,\quad x\in\Z.
\end{align}
From (\ref{1122}), we choose values of $g(x)$ and $h(x)$ to obtain the following equalitites.
\begin{align}\label{gh}
g(x+1)=g(x)-\theta(x)-b(x), \quad h(x-1)=h(x)-\theta(x)+b(x)\quad x\in\Z.
\end{align}
By (\ref{gh}), phases of (1, 2) and (2, 1) components of $\tilde{C}(x)$ are deformed as follows, respectively.
\begin{align}\label{12}
g(x+1)+\theta(x)-h(x)&=\{g(x)-\theta(x)-b(x)\}+\theta(x)-h(x)=-\{-g(x)-b(x)-h(x)\},
\\\label{21}
h(x-1)+\theta(x)-g(x)&=\{h(x)-\theta(x)+b(x)\}+\theta(x)-g(x)=-g(x)+b(x)+h(x).
\end{align}
From (\ref{12}) and (\ref{21}), we set $\theta'(x):=-g(x)+b(x)+h(x)$ and $\alpha'(x):=e^{\im\theta'(x)}\alpha(x)$. Then we have
$$ \tilde{C}(x)=\begin{pmatrix}|\beta(x)| & \overline{\alpha'(x)} \\ -\alpha'(x) & |\beta(x)|\end{pmatrix}.$$
Thus the former assertion holds.

Next, we consider explicit values of $g(x)$ and $h(x)$, respectively. From (\ref{gh}), we have
$$
g(x)=\begin{cases} g(0)-\displaystyle\sum^{x-1}_{k=0}\theta(k)-\displaystyle\sum^{x-1}_{k=0}b(k)\quad &(x\ge1),
\\
g(0) & (x=0),
\\
g(0)+\displaystyle\sum^{-1}_{k=x}\theta(k)+\displaystyle\sum^{-1}_{k=x}b(k) \quad &(x\le -1).
\end{cases}
$$
$$
h(x)=\begin{cases} h(0)+\displaystyle\sum^{x}_{k=1}\theta(k)-\displaystyle\sum^{x}_{k=1}b(k)\quad &(x\ge1),
\\
h(0) & (x=0),
\\
h(0)-\displaystyle\sum^{0}_{k=x+1}\theta(k)+\displaystyle\sum^{0}_{k=x+1}b(k) \quad &(x\le -1).
\end{cases}
$$
Note that $g(0)$ and $h(0)$ can be chosen arbitrarily. Finally we consider the convergence of $\alpha'$ under \eqref{ass:long:pre}. For $x\ge1$, we have
\begin{align*}
|\alpha'(x+1)-\alpha'(x)|&=|e^{\im\theta'(x+1)}\alpha(x+1)-e^{\im\theta'(x)}\alpha(x)|
\\
&\le |\theta'(x+1)-\theta'(x)||\alpha(x)|+|\alpha(x+1)-\alpha(x)|
\\
&\le 3|\theta(x+1)-\theta(x)|\cdot1+|\alpha(x+1)-\alpha(x)|\in l^1(\N).
\end{align*}
Thus the right limit $\alpha'_{+}:=\lim_{x\rightarrow\infty}\alpha'(x)$ exists. By the similar argument for $x\le -1$, we can show the existence of left limit $\alpha'_{-}:=\lim_{x\rightarrow-\infty}\alpha'(x)$. Since $|\beta(x)|\rightarrow \sqrt{1-|\alpha_{\infty}|^2}$ as $x\rightarrow\pm\infty$, we have $|\alpha'_{+}|=|\alpha'_{-}|$. Thus the result follows.
\end{proof}

\section*{Acknowledgments}
The first author was supported by the JSPS KAKENHI Grant Numbers JP19K03579, G19KK0066A, JP17H02851 and JP17H02853. The second author was supported by the JSPS KAKENHI Grant Numbers JP18K03327. The last author was supported by the JSPS KAKENHI Grant Numbers 21K13846. This work was supported by the Research Institute of Mathematical Sciences, an International Joint Usage/Research Center located in Kyoto University and by 2019 IMI Joint Use Research Program Short-term Joint Research \lq\lq Mathematics for quantum walks as quantum simulators".

Masaya Maeda
\\
Department of Mathematics and Informatics,
\\
Graduate School of Science,
\\
Chiba University,
\\
Chiba 263-8522, Japan.
\\
{\it E-mail Address}: {\tt maeda@math.s.chiba-u.ac.jp}
\vspace{3mm}
\\
\hspace{7mm}Akito Suzuki
\\
Division of Mathematics and Physics, 
\\
Faculty of Engineering, 
\\
Shinshu University, 
\\
Wakasato, Nagano 380-8553, Japan.
\\
{\it E-mail Address}: {\tt akito@shinshu-u.ac.jp}
\vspace{3mm}
\\
\hspace{7mm}Kazuyuki Wada
\\
Department of General Science and Education,
\\
National Institute of Technology, Hachinohe College, 
\\
Hachinohe 039-1192, Japan.
\\
{\it E-mail Address}: {\tt wada-g@hachinohe.kosen-ac.jp}

\end{document}